\theoremstyle{plain}\newtheorem{theorem}{Theorem}[section]
\theoremstyle{plain}\newtheorem{proposition}[theorem]{Proposition}
\theoremstyle{plain}\newtheorem{lemma}[theorem]{Lemma}
\theoremstyle{plain}\newtheorem{corollary}[theorem]{Corollary}
\theoremstyle{definition}\newtheorem{definition}[theorem]{Definition}
\theoremstyle{remark}\newtheorem{remark}[theorem]{Remark}
\theoremstyle{definition}\newtheorem{example}[theorem]{Example}
\def\m{\mathbb{M}}
\def\k{\mathbb{K}}
\def\l{\mathbb{L}}
\def\z{\mathbb{Z}}
\def\c{\mathbb{C}}
\def\p{\mathbb{P}}
\def\sl{SL_{2}(\mathbb{C})}
\def\a{\alpha}
\def\b{\beta}
\def\g{\gamma}
\def\D{\Delta}
\def\d{\delta}
\title{Liouvillian solutions for second order linear differential 
equations with polynomial coefficients}
\author{Primitivo B. Acosta-Hum\'anez$^{a,b}$, David Bl\'azquez-Sanz$^c$ \& Henock Venegas-G\'omez$^{c}$}
\date{\vspace{-2ex}}
\begin{document}

\maketitle
\begin{center}
$^a$ Instituto Superior de Formaci\'on Docente Salom\'e Ure\~na, Santiago, Dominican Republic.\\

$^b$ Fac. Ciencias B\'asicas y Biom\'edicas, Universidad Sim\'on Bol\'ivar, Barranquilla, Colombia.\\

$^c$ Escuela de Matem\'aticas, Facultad de Ciencias, Universidad Nacional de Colombia - Sede Medell\'in, Colombia.
\end{center}
\begin{abstract}
In this paper we present an algebraic study concerning the general second order linear differential equation with polynomial coefficients. By means of Kovacic's algorithm and asymptotic iteration method we find a degree independent algebraic description of the spectral set: the subset, in the parameter space, of Liouville integrable differential equations. For each fixed degree, we prove that the spectral set is a countable union of non accumulating algebraic varieties. This algebraic description of the spectral set allow us to bound the number of eigenvalues for algebraically quasi-solvable potentials in the Schr\"odinger equation.
\medskip

   \noindent \textit{Keywords and Phrases.} Anharmonic oscillators, Asymptotic Iteration Method, Kovacic Algorithm, liouvillian solutions, parameter space, quasi-solvable models, Schr\"odinger equation, spectral varieties. \medskip
    
   \noindent \textbf{MSC2010.} Primary 34M15. Secondary 81Q35 
\end{abstract}
\section*{Introduction}

Eigenvalue problems and explicit solutions for linear differential equations is a topic that has drawn the attention of many researchers in the last decades. Let us summarize briefly the known results about explicit solutions for the one dimensional stationary Schr\"odinger equation.\\

We start mentioning that Natanzon in 1971, see \cite{N71}, introduced \emph{exactly solvable potentials},  which today are known as \emph{Natanzon potentials}. The seminal work of Natanzon inspired further researchers about exactly solvable potentials, although in the sense of Natanzon exactly solvable potentials also include potentials in where Schr\"odinger equations have eigenfunctions of hypergeometric type, not necessarily Liouvillian functions. The exactly solvable potentials, also known as solvable potentials, we extended to Schr\"odinger equations with explicit eigenfunctions. In this sense, solvable potentials are related to Schr\"odinger equations with eigenfunctions belonging to the set of special functions (Airy, Bessel, Error, Ei, Hypergeometric, Whittaker, Heun), not necessarily Liouvillian! Moreover, in case of Coulomb and 3D harmonic oscillator potentials correspond to Schr\"odinger equations which are transformed into Whittaker differential equations, Martinet-Ramis in \cite{MR89} established the necessary and sufficient conditions to determine the obtaining of Liouvillian solutions of the Whittaker differential equations.\\

To avoid confusion between explicit and Liouvillian solutions it was introduced the concept of \emph{algebraic spectrum} in in \cite{A09}. Also known as Liouvillian spectral set it is the set of eigenvalues for which the Schr\"odinger equation has Liouvillian eigenfunctions, see also \cite{A10,AMW11}. In some scenarios it is known that bounded eigenfunctions of Schr\"odinger operator are necessarily Liouvillian, see \cite{BY12}.  
Potentials with infinite countable algebraic spectrum are called \emph{algebraically solvable potentials} and those with finite algebraic spectrum \emph{algebraically quasi-solvable potentials}, for complete details see \cite[\S 3.1, pp. 316]{AMW11} and see also \cite{A09,A10}.  \\

On the other hand, Turbiner in 1988, see \cite{T88}, following the same philosophy of Natanzon, introduced \emph{quasi-solvable potentials}. The seminal paper of Turbiner leaded to the seminal paper of Bender and Dunne in 1996, see \cite{BD96}, in where they obtain a family of orthogonal polynomials in the energy values of the Schr\"odinger equation with sextic anharmonic potentials, see also \cite{2SHC06} for the study of more general sextic anharmonic oscillators. Due to Schr\"odinger equation with quartic anharmonic oscillator potential falls in \emph{triconfluent Heun equation}, see \cite{DL92}, it is in some sense a generalized Natanzon potential (exactly solvable) although there no exist Liouvillian eigenfunctions. In a similar way for algebraically solvable potentials, in \cite[\S 3.1, pp. 316]{AMW11} also was introduced the concept of algebraically quasi-solvable potential as the potentials with finite non empty algebraic spectrum, see also \cite{A09,A10}. Examples of algebraically solvable potentials and algebraically quasi-solvable potentials (quartic and sextic oscillators) were presented in \cite{A09,A10,AMW11} using \cite[Theorem 2.5, pp. 276]{AB08}, which corresponds to the application of Kovacic algorithm for reduced second order linear differential equation with polynomial coefficients.\\

In a more general sense, the study of second order linear differential equations with polynomial coefficients can be done through Kovacic's algorithm, see \cite{Kov86} and see also \cite{DL92}, as well by Asymptotic Iteration Method., see \cite{CHS}. Recently Combot in \cite{C18} developed another method to obtain exactly solvable potentials, in the sense of Natanzon, involving \emph{rigid functions} in the sense of Katz.\\ 

 From now on, along this paper, we say exactly solvable potential to mean algebraically solvable potential and we say quasi-solvable potential to mean algebraically quasi-solvable potential. That is, along this paper we only consider Liouvillian solutions. Thus, some results of this paper are obtained by the applying of Asymptotic Iteration Method and by applying of Kovacic Algorithm.  It is because that we study parameterized differential equations and then we obtain families of solutions depending on parameters with these methods. Moreover, theoretical results using affine algebraic varieties, in particular spectral varieties, are presented to obtain the conditions for the solutions and eigenvalues of one dimensional Schr\"odinger equations with anharmonic potentials and for more general second order linear differential equations.\\
The structure of the paper is as follows. Section \ref{para} is devoted to the definitions of parameter space $\mathbb P_{2n}$, spectral set $\mathbb L_{2n}$, spectral varieties $\mathbb L_{2n,d}$ and the statement of our first main result, Theorem \ref{th:main_structure}. Section \ref{liou} is devoted to the definition of polynomial-hyperexponential solutions, the reduction of the parameter space through D'Alembert transformation, and Kovacic's algorithm. The analysic of equation $y''=(x^{2n}+\mu x^{n-1})y$ allows us to prove the non-emptyness of the spectral varieties $\mathbb L_{2n,kn}$ and $\mathbb L_{2n,kn+1}$ (Corolary \ref{cor:no_empty}). Section \ref{auxi} contains the results of this paper related to Asymptotic Iteration Method. We find a sequence of differential polynomials $\Delta_d(a,b)$ in two variables that codify the equations of the spectral varieties $\mathbb L_{2n,d}$ inpendently of $n$ (Theorem \ref{seqdelta}). The proof of Theorem \ref{th:main_structure} is included at the end of the section. Section \ref{schr} is devoted to the Liovullian solutions of Schr\"odinger equations with polynomial potentials. We proof that the number of the values of the energy parameter is bounded by \emph{the arithmetic condition} which is a simple function of the coefficients of the potential (Theorem \ref{th:eigen}).

\section{Parameter space}\label{para}

Let us consider the general second order linear differential equation,
\begin{equation}\label{eq:1}
u'' + P(x)u' + Q(x)u = 0,
\end{equation}
with polynomial coefficients $P(x)= \sum_{j=1}^{n} p_jx^j\in\c[x]_{\leq n}$ and $Q(x) = \sum_{j=1}^{2n} q_jx^j \in\c[x]_{\leq 2n}$. We also take into account a non-degeneracy condition $p_n^2-q_{2n}\neq 0$. Thus, the \emph{parameter space} corresponding to the family of equations \eqref{eq:1} is,
$$\p_{2n} = (\c[x]_{\leq n} \times \in\c[x]_{\leq 2n}) - \{p_n^2-q_{2n} = 0\},$$
that we consider an an affine algebraic variety of dimension $3n+2$ with affine coordinates $p_0,\ldots,p_n,q_0,\ldots,q_{2n}$. Our purpose is to describe algebraically the \emph{spectral set} $\l_{2n}\subseteq \p_{2n}$. That is, the set of equations in the family \eqref{eq:1} admitting a Liouvillian solution. An important class of Liouvillian functions, specifically relevant for the integrability of equation \eqref{eq:1} is the following. 

\begin{definition}
A polynomial-hyperexponential function of polynomial degree $d$ and exponential degree $k$ is a function of the form
\begin{equation}\label{sol:liouvillian_1}
u(x) = P_d(x)e^{\int A_k(x)dx},
\end{equation}
with $P_d(x)$ and $A_k(x)$ polynomials of degree $d$ and $k$ respectively. 
\end{definition}


\begin{definition}
The spectral subvariety $\l_{2n,d}$ is the subset of $\l_{2n}$ corresponding to equations in the family \eqref{eq:1} having a polynomial-hyperexponentional solution of polynomial degree $d$.
\end{definition}


\begin{theorem}\label{th:main_structure}
Let $\l_{2n}\subset \p_{2n}$ be the set of equations in family \eqref{eq:1} having a Liouvillian solution, and $\l_{2n,d}$ be the set of equations in family \eqref{eq:1} having a polynomial-hyperexponential solution of polynomial degree $d$. The following statements hold:
\begin{itemize}
    \item[(a)] For any fixed $n\in \mathbb N$ there is an infinite set of values of $d$ such that $\l_{2n,d}$ is not empty.
    \item[(b)] If not empty, $\l_{2n,d}$ is an algebraic variety of codimension $\leq n$ in $\p_{2n}$.
    \item[(c)] For any $d \neq k$ the algebraic varieties $\l_{2n,d}$ and $\l_{2n,k}$ are disjoint in $\p_{2n}$.
    \item[(d)] Any compact subset of $\p_{2n}$ intersects only a finite number of algebraic varieties of the family $\{\l_{2n,d}\}_{d\in \mathbb N}$.
\end{itemize}
Furthermore,
$$\l_{2n} = \bigcup_{d=0}^\infty \l_{2n,d}.$$
Therefore we conclude that $\l_{2n}$ is a singular analytic submanifold of $\p_{2n}$ consisting in the enumerable union of pairwise disjoint algebraic varieties of codimension $\leq n$ in $\p_{2n}$.
\end{theorem}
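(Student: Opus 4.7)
The approach is to assemble results from Sections \ref{liou} and \ref{auxi}, chiefly Corollary \ref{cor:no_empty} for non-emptiness and Theorem \ref{seqdelta} for the packaging of algebraic conditions, and to exploit Kovacic's algorithm for equations $y''=r(x)y$ with polynomial $r$ in order to identify a rigid integer invariant that controls disjointness and local finiteness.

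Part (a) is obtained by directly invoking Corollary \ref{cor:no_empty}: the family $y''=(x^{2n}+\mu x^{n-1})y$ produces non-empty $\l_{2n,kn}$ and $\l_{2n,kn+1}$ for all positive integers $k$, giving infinitely many non-empty $\l_{2n,d}$.

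For part (b), the plan is first to apply the D'Alembert transformation of Section \ref{liou}, which depends polynomially on $(p_j,q_j)$ and reduces \eqref{eq:1} to the normal form $y''=r(x)y$ with $r\in\c[x]_{\leq 2n}$ while preserving polynomial-hyperexponentiality (only the exponential degree shifts). Writing $r=\omega^2+\omega'+R$ with $\omega$ the polynomial of degree $n$ (unique up to sign) that matches the top $n+1$ coefficients of $r$ and $\deg R\leq n-1$, the existence of a polynomial-hyperexponential solution $y=P\,e^{\int\omega\,dx}$ of polynomial degree $d$ is equivalent to the auxiliary equation $P''+2\omega P'-RP=0$ admitting a polynomial $P$ of degree exactly $d$. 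After normalizing the top coefficient, this is a linear system whose coefficient of $x^{n+d-1}$ yields the leading-coefficient (``Kovacic integer'') condition $R_{n-1}=2d\omega_n$, contributing one polynomial equation on the parameters, while matching coefficients of $x^0,\ldots,x^{n+d-2}$ gives $n+d-1$ equations in $d$ unknowns and hence imposes $n-1$ further compatibility conditions; altogether at most $n$ polynomial equations in the coordinates of $\p_{2n}$, which is exactly the codimension bound. Theorem \ref{seqdelta} organizes these conditions as the vanishing of $\Delta_d(a,b)$, uniformly in $n$.

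Parts (c) and (d) both rest on the identity $d_++d_-=-n$ for the two candidate degrees arising from the two sign choices of $\omega$: since $n\geq 1$, at most one of $d_\pm$ is a non-negative integer, so each equation lies in at most one variety $\l_{2n,d}$, proving (c); and since that admissible $d$ is a rational function of the parameters on $\p_{2n}$, it is bounded on any compact subset, and being integer-valued it can take only finitely many values there, proving (d).

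The final identity $\l_{2n}=\bigcup_{d=0}^\infty \l_{2n,d}$ follows because the reduced equation $y''=ry$ with $r$ a non-constant polynomial has infinity as its only (and irregular) singularity, ruling out Kovacic's cases 2 and 3; every Liouvillian solution therefore belongs to case 1 and is polynomial-hyperexponential. The main obstacle is the sharp bound $n$ on the codimension in part (b): showing that among the $n+d$ coefficient conditions exactly $n$ are genuinely independent (for all $d$ and $n$) requires the uniform structural information carried by the differential polynomials $\Delta_d(a,b)$ of Theorem \ref{seqdelta}.
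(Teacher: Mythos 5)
Your proposal is correct and follows essentially the same route as the paper: non-emptiness via the canonical equation (Corollary \ref{cor:no_empty}), the codimension bound via the rank/compatibility count on the $(n+d)\times(d+1)$ linear system for the auxiliary equation (this is exactly Proposition \ref{pr:codim}, organized by Theorem \ref{seqdelta}), disjointness and local finiteness from the confinement of $\l'_{2n,d}$ to the hyperplanes $b_{n-1}=\pm(2d+n)$ (your identity $d_++d_-=-n$ is the same observation), and the union identity from the Kovacic dichotomy (Corollary \ref{cor:polychar}). The only cosmetic quibble is that the admissible $d$ is an algebraic multivalued function of the parameters rather than a rational one, but boundedness on compact sets, which is all your argument for (d) needs, still holds.
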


In what follows we will deal with the proof of Theorem \ref{th:main_structure} and the calculation of the equations of the spectral subvarieties $\l_{2n,d}$ in suitable coordinates.


\section{Liouvillian solutions}\label{liou}

\subsection{Reduction of the parameter space}

As it is well known, equation \eqref{eq:1} can be reduced to trace free form
\begin{equation}\label{eq:2}
y'' = R(x)y
\end{equation}
by means of D'Alembert transform $u = {\rm exp}\left(-\frac{1}{2}\int P(x)dx \right)y$, where $R(x) = \frac{P(x)^2}{4} + \frac{P'(x)}{2} - Q(x)$. Note that the degree of $R(x)$
is not greater than ${\rm max}\{ {\rm deg}(Q(x)) ,2{\rm deg} (P(x)) \}$. Note that the family of equations of the form \eqref{eq:3} with $R(x)$ of fixed degree $2n$ are parameterized by the space $\k_{2n} =\mathbb C[x]_{2n}$ of polynomials of degree $2n$ that we see as an affine algebraic variety of dimension $2n+1$, parameterized by the coefficients of $R(x)$ and thus isomorphic to $\c^*\times \mathbb \c^{2n}$. Note that the family \eqref{eq:2} is included in \eqref{eq:1}, where $R(x)\in \mathbb K_{2n}$ corresponds to $(0,-R(x))\in \mathbb P_{2n}$. The D'Alembert transformation is a polynomial map in the coefficients of $P(x)$ and $Q(x)$ and it can be seen as a retract,
$${\rm dal}_{2n}\colon \mathbb P_{2n}\to \mathbb K_{2n}\subset \p_{2n}, \quad (P(x),Q(x))\mapsto R(x)= \frac{P(x)^2}{4} + \frac{P'(x)}{2} - Q(x)\mapsto (0,-R(x))$$
of the natural inclusion $\k_{2n} \subset \p_{2n}$. Taking into account that the ratio between $u$ and $y$ is the exponential of a polynomial, we obtain that $(P(x),Q(x))\in \l_{2n,d}$ if and only if $R(x)\in \l_{2n,d} \cap \k_{2n}$. Therefore, the analysis of polynomial-hyperexponential solutions of a given polynomial degree can be restricted to the trace free family $\mathbb K_{2n}$.

Let us write $R(x) = \sum_{j=0}^{2n} r_jx^j$. Equation \eqref{eq:2} can be reduced to the case of monic polynomial coefficient by the change of variables $x\mapsto\sqrt[2n+2]{\frac{1}{r_{2n}}}x$ which lead us to the equation
\begin{equation}
y''=\left(x^{2n}+\sum\limits_{j=0}^{2n-1}\frac{r_{j}}{\sqrt[2n+2]{\frac{1}{r_{2n}}}}x^{j}\right)y,\hspace{4pt}a_{k}\in\c^{*},a_{i}\in\c.
\end{equation}
For the next step, let us consider $\m_{2n}\subset \k_{2n}$ the family of equations \eqref{eq:2} with monic polynomial coefficient. It is an an algebraic variety isomorphic to $\mathbb C^{2n}$. Since the $(2n+2)$-th root of $r_{2n}$ is an algebraic multivalued function of $r_{2n}$, any equation in $\mathbb K_{2n}$ has $2n+2$ different equivalent reductions in $\mathbb M_{2n}$. This can be seen as an algebraic correspondence $\mathcal C_{2n} \subset \k_{2n} \times \m_{2n}$. This algebraic correspondence is a $(2n+2)$-fold covering space of $\k_{2n}$ by the first projection, $\pi_1$ and the $(2n+2)$ monic reductions of the equation of coefficient $R(x)$ are given by $\pi_2(\pi_1^{-1}(\{R(x)\}))$. Note that $R(x)$ is in $\mathbb L_{2n,d}$ if and only if so are any of its monic reductions. Therefore, if suffices to focus our analysis to equations in the family $\m_{2n}$.

\subsection{Kovacic's algorithm and adapted coordinates in ${\m}_{2n}$}

From now on let $\l'_{2n} = \l_{2n}\cap \m_{2n}$ be the \emph{reduced spectral set} consisting of equations in $\m_{2n}$ having a Liouvillian solution, and let $\l'_{2n,d} = \l_{2n,d} \cap \m_{2n}$ be the \emph{reduced spectral variety} consisting of equations in $\m_{2n}$ having a polynomial-hyperexponential solution of polynomial degree $d$. 

Note that, since D'Alembert reduction does not affect the polynomial degree of polynomial-hyperexponential solutions then a differential equation in the family \eqref{eq:1} has a polynomial-hyperexponential solution of polynomial degree $d$ if and only if so has any of its monic D'Alembert reductions. Therefore, if $\l_{2n,d}$ is a subvariety of $\p_{2n}$ then ${\rm codim}(\l_{2n,d}, \p_{2n}) = {\rm codim}(\l_{2n,d}',\m_{2n})$.

Here we will analyze the existence of Liovillian solutions of equations in the family $\mathbb M_{2n}$. This is done in terms of some known theoretical results obtained by application of Kovacic's algorithm \cite{Kov86}. A first step is to introduce a system of coordinates in $\m_{2n}$ that fits our analysis of equation \eqref{eq:2} better than the coefficients of $R(x)$. 

\begin{lemma}\citep[Lemma 2.4, pp. 275]{AB08}
\label{comsqr} Every monic polynomial $M(x)$ of even degree $2n$ can be written in one only way completing squares, that is,
\begin{equation}
M(x)=A(x)^{2}+B(x),
\end{equation}
with $A(x)= x^n +\sum_{j=0}^{n-1} a_jx^j$ is a monic polynomial of degree $n$ and 
$B(x) = \sum_{j=0}^{n-1} b_jx^j$ is a polynomial of degree at most $n-1$.
\end{lemma}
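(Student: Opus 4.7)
My plan is to prove this by matching coefficients. Writing $M(x) = x^{2n} + \sum_{j=0}^{2n-1} m_j x^j$ and $A(x) = x^n + \sum_{j=0}^{n-1} a_j x^j$, I observe that the problem has the right count of degrees of freedom: the polynomial $M(x)$ contributes $2n$ free coefficients ($m_0,\ldots,m_{2n-1}$), while the unknowns $(a_0,\ldots,a_{n-1},b_0,\ldots,b_{n-1})$ also form a $2n$-dimensional affine space. Since $\deg B\leq n-1$, the coefficients of $x^{n},x^{n+1},\ldots,x^{2n}$ in $M(x)$ must be captured entirely by $A(x)^2$, while the lower coefficients are free to be absorbed by $B(x)$.

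The key step is to determine $A(x)$ by a recursive extraction. For each $1\leq k\leq n$, the coefficient of $x^{2n-k}$ in $A(x)^2$ has the form
\begin{equation*}
[x^{2n-k}]\,A(x)^2 \;=\; 2a_{n-k} + F_k(a_{n-1},a_{n-2},\ldots,a_{n-k+1}),
\end{equation*}
where $F_k$ is a polynomial expression depending only on the $a_j$ with index strictly greater than $n-k$ (it is a sum of products $a_i a_j$ with $i+j=2n-k$ and $i,j<n$). Setting this equal to $m_{2n-k}$ yields a triangular system which can be solved uniquely for $a_{n-1},a_{n-2},\ldots,a_0$ in that order, because the coefficient of the new unknown is always $2\neq 0$. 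This simultaneously establishes existence and uniqueness of the monic polynomial $A(x)$.

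Once $A(x)$ is produced, one simply defines $B(x):=M(x)-A(x)^2$. By construction, the coefficients of $x^n,x^{n+1},\ldots,x^{2n}$ in $B(x)$ vanish, so $\deg B\leq n-1$ as required. Conversely, any decomposition $M=A^2+B$ with the stated degree constraints forces the same recursive identities on the coefficients of $A$, whence uniqueness. No serious obstacle is anticipated; the argument is elementary, the only care needed is to check that the leading $x^{2n}$ coefficient on both sides matches automatically from the monicity assumptions, which it does since $A$ is monic of degree $n$.
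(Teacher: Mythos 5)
Your argument is correct and complete: the triangular system $m_{2n-k}=2a_{n-k}+F_k(a_{n-1},\dots,a_{n-k+1})$, $k=1,\dots,n$, with $F_k$ depending only on higher-indexed coefficients, does determine $A(x)$ uniquely (the pivot $2$ is invertible over $\mathbb{C}$), and setting $B=M-A^2$ then forces $\deg B\leq n-1$. The paper itself gives no proof of this lemma --- it is imported by citation from \cite{AB08} --- and your coefficient-matching recursion is exactly the standard ``completing the square'' argument that the cited source uses, so there is nothing to add.
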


The map $\m_{2n} \to \mathbb C^{2n}$, $R(x)\mapsto (a_0,\ldots,a_{n-1},b_0,\ldots,b_{n-1})$ is a regular invertible polynomial map. Therefore, we may consider the coefficients of $A(x)$ and $B(x)$ as a system of regular coordinates is $\mathbb M_{2n}$. 
The following results gives us precise information about the sets $\l'_{2n}$ and $\l'_{2n,d}$.

\begin{theorem}\citep[Theorem 2.5, pp. 276]{AB08}\label{polychar} Let us consider the differential equation,
\begin{equation}\label{eq:3}
y'' = M(x)y,
\end{equation}
with $M(x)\in \c[x]$ a monic polynomial of degree $k>0$. Then its differential Galois Group $G$ with coefficients in $\c(x)$ falls in one of the following cases:
\begin{enumerate}
\item $G=\sl$ (non-abelian, non-solvable, connected group).
\item $G=\c^{*}\ltimes\c$ (non-abelian, solvable, connected group).
\end{enumerate}
Furthermore, the second case is given if and only if the following conditions holds:
\begin{enumerate}
\item $M(x)$ has even degree $k=2n$, 
\item Writing $M(x) = A(x)^2+B(x)$ as in lemma \ref{comsqr}, the quantity $\pm b_{n-1}-n$ is a non-negative even integer $2d$, $d\in\z_{\geq 0}$.
\item There exist a monic polynomial $P_{d}$ of degree $d$ satisfying at least one of the following differential equations,
\begin{eqnarray}
P''_{d}+2AP'_{d}-(B-A')P_{d}&=0,\label{auxeq:1}\\
P''_{d}-2AP'_{d}-(B+A')P_{d}&=0\label{auxeq:2}.
\end{eqnarray}
\end{enumerate}
In such case, Liouvillian solutions are given by
\begin{eqnarray}
&y_{1}=P_{d}e^{\int A dx},\hspace{4pt}& y_{2}=y_{1}\int\frac{e^{-2\int A dx}}{P_{d}^{2}},\hspace{4pt}or\\
&y_{1}=P_{d}e^{-\int A dx},\hspace{4pt}& y_{2}=y_{1}\int\frac{e^{2\int A dx}}{P_{d}^{2}}.
\end{eqnarray}
\end{theorem}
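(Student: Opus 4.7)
The plan is to apply Kovacic's algorithm to $y''=M(x)y$ and exploit the fact that its only singularity is the (irregular) point at infinity, which drastically restricts the possible Galois groups. The argument splits into a dichotomy between $\sl$ and the connected Borel $\c^*\ltimes\c$, followed by an explicit reduction of the Borel case to the polynomial-hyperexponential ansatz satisfying the stated arithmetic condition.

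First I would show that the differential Galois group $G\subseteq\sl$ is connected. Since $M(x)\in\c[x]$ the equation is regular on $\mathbb A^1$, so every algebraic element of the Picard-Vessiot extension $L/\c(x)$ is a holomorphic function on $\c$ that is algebraic over $\c(x)$; its minimal polynomial defines a cover of $\mathbb P^1$ unbranched on $\mathbb A^1$, and such covers are trivial because $\mathbb A^1$ is simply connected. Hence $L^{G^0}=\c(x)$ and $G=G^0$ is connected. The connected algebraic subgroups of $\sl$ are, up to conjugacy, $\{e\}$, $\mathbb G_a$, $\mathbb G_m$, the Borel $\c^*\ltimes\c$, and $\sl$ itself; the first four are all contained in a Borel, giving the stated dichotomy between cases 1 and 2.

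By the Lie-Kolchin theorem, $G\subseteq\c^*\ltimes\c$ iff there is a common $G$-eigenline, i.e., iff $y''=My$ admits a solution of the form $y=e^{\int\omega\,dx}$ with $\omega\in\c(x)$, equivalently a rational solution of the Riccati $\omega'+\omega^2=M$. A local analysis at each finite pole $c$ of such $\omega$ rules out poles of order $\geq 2$ and forces simple-pole residues to satisfy $\alpha^2-\alpha=0$, so $\omega=P'/P+h$ for a monic polynomial $P\in\c[x]$ with simple roots and a polynomial $h\in\c[x]$. Substituting into the Riccati and clearing $P$ yields
\begin{equation*}
P''+2hP'+(h^2+h'-M)P=0.
\end{equation*}
Comparing degrees at infinity forces $\deg h=n$ and $\deg(h^2-M)<n$; in particular $\deg M=2n$ must be even (item (1)), and $h=\pm A$ where $M=A^2+B$ is the decomposition of Lemma \ref{comsqr}, recovering \eqref{auxeq:1} or \eqref{auxeq:2} (item (3)). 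Extracting the coefficient of $x^{n+d-1}$ with $d=\deg P$ gives $2d+n\mp b_{n-1}=0$, i.e.\ $\pm b_{n-1}-n=2d\in 2\z_{\geq 0}$ (item (2)). The converse direction (that $y_1=P_de^{\pm\int A\,dx}$ solves $y''=My$ whenever (1)--(3) hold) is a direct substitution, and the second independent solution $y_2=y_1\int e^{\mp 2\int A\,dx}/P_d^2\,dx$ is supplied by reduction of order using the constancy of the Wronskian of the trace-free equation.

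The main obstacle I anticipate is the rigorous justification of connectedness of $G$: turning the plausible heuristic ``no finite singularities implies no finite ramification in $L$'' into a proof requires identifying the finite quotient $G/G^0$ with the Galois group of the algebraic closure of $\c(x)$ inside $L$ via the Picard-Vessiot correspondence, and controlling the ramification locus of that extension inside the singular locus of the equation. Once connectedness is in hand, the remainder is a routine local-exponent calculation in Kovacic's Case I, and the three conditions emerge naturally from the behavior at the unique singular point.
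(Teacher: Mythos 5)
The paper offers no proof of this theorem: it is imported verbatim from \cite{AB08} and the citation stands in for the argument. Your reconstruction is the standard Kovacic Case~I analysis, and most of it is sound: connectedness of $G$ from the absence of finite singular points, the reduction to a rational solution $\omega=P'/P+h$ of the Riccati equation with only simple poles of residue $1$, the degree count at infinity forcing $\deg M=2n$ even and $h=\pm A$ in the decomposition of Lemma \ref{comsqr}, the extraction of the arithmetic condition $\pm b_{n-1}-n=2d$ from the coefficient of $x^{n+d-1}$, and the converse by direct substitution are all correct. (A cosmetic slip: algebraic elements of the Picard--Vessiot extension are meromorphic, not holomorphic, on $\c$; the unbranched-cover argument goes through unchanged.)

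There is, however, one genuine gap. Your dichotomy only delivers ``$G=\sl$ or $G$ is contained in a Borel subgroup'': you explicitly lump $\{e\}$, $\mathbb{G}_a$, $\mathbb{G}_m$ and the Borel together, whereas the theorem asserts that in case 2 the group \emph{equals} $\c^{*}\ltimes\c$, a specific non-abelian solvable group. To close this you must exclude the proper connected subgroups of the Borel. The trivial and unipotent cases would produce a nonzero rational solution of $y''=My$; such a solution would have to be a polynomial (a finite pole of order $p$ gives a pole of order $p+2$ on the left but only $p$ on the right), and then $\deg y''=\deg y-2$ cannot equal $\deg(My)=\deg y+2n$, a contradiction. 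The diagonalizable case $G=\mathbb{G}_m$ would give two independent eigensolutions, which by your own Riccati analysis must be of the form $y_1=Pe^{\int A}$ and $y_2=Qe^{-\int A}$ with $P,Q$ monic polynomials; their Wronskian is $PQ'-P'Q-2APQ$, a polynomial of degree $\deg P+\deg Q+n\geq 1$ with leading coefficient $-2\neq 0$, contradicting the constancy of the Wronskian of a trace-free equation. With this supplement your argument proves the statement as written; for the way the theorem is actually used in the paper (the equivalence of Liouvillian integrability with conditions (1)--(3) and the explicit solutions), the gap is harmless.
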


A careful read of Theorem \ref{polychar} gives us the following.

\begin{corollary}\label{cor:polychar}
The sets $\l'_{2n}$ and $\l'_{2n,d}$ in $\m_{2n}$ satisfy the following.
\begin{enumerate}
    \item $\l'_{2n} = \bigcup_{d=0}^\infty \l'_{2n,d}$.
    \item $\l_{2n,d}$ is contained in the hypersurface of $\m_{2n}$ of equation $b_{n-1}^2 - (n+2d)^2 = 0$.
\end{enumerate}
Therefore, the sets $\l_{2n}$ and $\l_{2n,d}$ in $\p_{2n}$ satisfy $\l_{2n} = \bigcup_{d=0}^\infty \l_{2n,d}$.
\end{corollary}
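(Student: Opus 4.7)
The plan is to extract the corollary directly from Theorem \ref{polychar}, which already does nearly all the work. The first step is to recall the fundamental dichotomy supplied by that theorem: for an equation in $\m_{2n}$, the differential Galois group over $\c(x)$ is either the non-solvable $\sl$ or the solvable group $\c^{*}\ltimes \c$. By the Picard-Vessiot criterion for Liouvillian solvability (Kolchin), an equation lies in $\l'_{2n}$ precisely when its Galois group is solvable, i.e.\ precisely when the second case of Theorem \ref{polychar} occurs. In that case the theorem explicitly produces a monic polynomial $P_d$ of some degree $d\in\z_{\geq 0}$ satisfying \eqref{auxeq:1} or \eqref{auxeq:2}, which yields a polynomial-hyperexponential solution $P_d\, e^{\pm\int A\,dx}$ of polynomial degree $d$. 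Hence the equation lies in $\l'_{2n,d}$; the reverse inclusion is trivial since any polynomial-hyperexponential function is Liouvillian. This establishes part (1).

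For part (2), I would simply unpack the second condition in Theorem \ref{polychar}: if an equation in $\m_{2n}$ is in $\l'_{2n,d}$, then $\pm b_{n-1}-n=2d$, so $b_{n-1}\in\{n+2d,-(n+2d)\}$, which immediately gives $b_{n-1}^{2}-(n+2d)^{2}=0$. Thus $\l'_{2n,d}$ lies in the zero locus of this polynomial in the coordinates $(a_0,\ldots,a_{n-1},b_0,\ldots,b_{n-1})$ on $\m_{2n}$ supplied by Lemma \ref{comsqr}.

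For the final assertion, I would transfer the decomposition $\l'_{2n}=\bigcup_d \l'_{2n,d}$ from $\m_{2n}$ up to $\p_{2n}$ in two moves, both of which are already prepared in the Reduction of the parameter space subsection. First, the $(2n+2)$-fold algebraic correspondence $\mathcal C_{2n}\subset\k_{2n}\times\m_{2n}$ lifts the decomposition from $\m_{2n}$ to $\k_{2n}$, since $R(x)\in\l_{2n,d}$ iff any (equivalently, all) of its monic reductions lies in $\l'_{2n,d}$. Second, the D'Alembert retract ${\rm dal}_{2n}\colon\p_{2n}\to\k_{2n}$ preserves the polynomial degree of polynomial-hyperexponential solutions (the ratio between $u$ and $y$ is itself a hyperexponential without polynomial factor), so $(P,Q)\in\l_{2n,d}$ iff ${\rm dal}_{2n}(P,Q)\in\l_{2n,d}\cap\k_{2n}$. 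Composing the two transfers yields $\l_{2n}=\bigcup_{d\geq 0}\l_{2n,d}$ in $\p_{2n}$.

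The proof has no real obstacle; the only subtlety is the logical gluing, namely making sure that (i) the Kolchin criterion is the bridge from ``Liouvillian solution exists'' to ``Galois group is $\c^{*}\ltimes\c$'', so that Theorem \ref{polychar} is applicable, and (ii) the transfer from $\m_{2n}$ to $\p_{2n}$ uses \emph{both} the covering $\mathcal C_{2n}$ and the D'Alembert retract, each of which preserves the invariant ``has a polynomial-hyperexponential solution of polynomial degree exactly $d$''. Everything else is bookkeeping.
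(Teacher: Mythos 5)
Your proposal is correct and follows essentially the same route as the paper: part (1) via the Galois-group dichotomy of Theorem \ref{polychar} (the paper leaves the Kolchin/Picard--Vessiot bridge implicit where you spell it out), part (2) by reading off condition 2 of that theorem, and the final assertion by transferring the decomposition through the D'Alembert reduction and the monic normalization, both of which preserve the polynomial degree of polynomial-hyperexponential solutions. No gaps; your version is merely more explicit than the paper's.
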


\begin{proof}
1. It is a consequence of the dichotomy of the Galois group. In case the group is not $SL_2(\c)$ it leads to a polynomial-hyperexponential solution.
2. It is a direct consequence of point $2$ in the second part of Theorem \ref{polychar},
The last statement of the corollary is a consequence of the point 1. and the fact the the reductions process from $\mathbb P_{2n}$ to $\mathbb M_{2n}$ preserves polynomial-hyperexponential solutions.
\end{proof}


\subsection{Canonical equation}

The following example:
\begin{equation}\label{eqcano}
y''=(x^{2n}+\mu x^{n-1})y, \quad {\mu \in \mathbb C}.
\end{equation}
that we refer to as \emph{canonical equation} gives us some information about the non emptiness of the sets ${\rm L}_{2n,d}$ for large $d$. Due to theorem \ref{polychar}, if \eqref{eqcano} has a Liouvillian solution, the parameter $\mu$ in the canonical coefficient $x^{2n}+\mu x^{n-1}$ is forced to be a discrete parameter that can be $\mu=2d+n$ or either $\mu=-2d-n$, where $d$ is a non-negative integer, which lead us to deal with two different equations,
\begin{eqnarray}
y'' &=&(x^{2n}+(2d+n)x^{n-1})y,\label{eqcano:1}\hspace{4pt}or\\
y'' &=&(x^{2n}-(2d+n)x^{n-1})y.\label{eqcano:2}
\end{eqnarray}
\begin{proposition}
The differential equation \eqref{eqcano:1} is integrable in the liouvillian sense if and only if, $d=(n+1)k$ or $d=(n+1)k+1$ where $k$ is a non-negative integer.
\end{proposition}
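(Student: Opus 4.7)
The plan is to specialise Theorem \ref{polychar} to the canonical polynomial $M(x) = x^{2n} + (2d+n)x^{n-1}$. Since $M(x)$ has no terms of degree between $n$ and $2n-1$, its unique completion of squares (Lemma \ref{comsqr}) is $A(x) = x^n$ and $B(x) = (2d+n)x^{n-1}$, so $b_{n-1} = 2d+n$ and condition (ii) of Theorem \ref{polychar} holds with the $+$ sign. Liouvillian integrability of \eqref{eqcano:1} is therefore equivalent to the existence of a monic polynomial $P_d$ of degree $d$ satisfying one of the specialised auxiliary equations
\begin{equation*}
P_d'' + 2x^n P_d' - 2d\, x^{n-1} P_d = 0 \qquad \text{or} \qquad P_d'' - 2x^n P_d' - (2d+2n)\, x^{n-1} P_d = 0,
\end{equation*}
coming from \eqref{auxeq:1} and \eqref{auxeq:2} respectively.

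First I would eliminate the second equation. Writing $P_d = \sum_{j=0}^d c_j x^j$ and extracting the coefficient of $x^m$ gives the recurrence $(m+2)(m+1)\, c_{m+2} = 2(m+d+1)\, c_{m-n+1}$, whose coupling factor $2(m+d+1)$ never vanishes for $m \geq n-1$. Reading this relation at $m = d-1, d, \ldots, d+n-1$, where the left-hand side automatically vanishes because $c_{m+2}=0$, successively forces $c_{d-n} = c_{d-n+1} = \cdots = c_{d-1} = c_d = 0$, which contradicts $c_d = 1$. Hence the analysis reduces to the first auxiliary equation.

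The same substitution in the first auxiliary equation produces
\begin{equation*}
c_{j+n+1} = \frac{2(d-j)}{(j+n+1)(j+n)}\, c_j,
\end{equation*}
a linear recurrence of stride $n+1$ that couples coefficients along arithmetic progressions of common difference $n+1$. In parallel, reading the coefficient equation at $0 \leq m \leq n-2$ yields $c_2 = c_3 = \cdots = c_n = 0$, and at $d-1 \leq m \leq d+n-2$ yields $c_{d-n} = \cdots = c_{d-1} = 0$. The first family of constraints annihilates every chain except the two emanating from $c_0$ and $c_1$, so a nonzero $c_d$ is possible only when $d$ lies in one of these two progressions, that is $d = (n+1)k$ or $d = (n+1)k+1$ with $k$ a non-negative integer.

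Conversely, for each such $d$ one iterates the recurrence from $c_0$ (resp.\ $c_1$) and observes that the factor $2(d-j)$ vanishes exactly at $j=d$, so the chain terminates at the correct index; rescaling the initial datum normalises $c_d$ to $1$ and produces the required monic polynomial, yielding a Liouvillian solution of \eqref{eqcano:1} via Theorem \ref{polychar}. The technical point I expect to be most delicate is checking that the surviving chain is not truncated prematurely by a vanishing factor, which amounts to verifying that $d - i(n+1) \neq 0$ (resp.\ $d - i(n+1) - 1 \neq 0$) for $0 \leq i < k$; this follows at once from the form of $d$.
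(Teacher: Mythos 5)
Your proof is correct, but it takes a genuinely different route from the paper. The paper transforms \eqref{eqcano:1} into a Whittaker equation via $z=\frac{2}{n+1}x^{n+1}$, $y = z^{-n/(2n+2)}\mathcal{W}$, and then invokes the Martinet--Ramis classification of Liouvillian-integrable Whittaker equations, which yields the congruence conditions on $d$ in two lines. You instead stay entirely within the framework of Theorem \ref{polychar}: you specialise the auxiliary equations \eqref{auxeq:1}--\eqref{auxeq:2} to $A=x^n$, $B=(2d+n)x^{n-1}$, rule out \eqref{auxeq:2} by showing its recurrence forces $c_d=0$, and then read off from the stride-$(n+1)$ recurrence $c_{j+n+1}=\frac{2(d-j)}{(j+n+1)(j+n)}c_j$ that a nonzero $c_d$ can only live on the chains emanating from $c_0$ or $c_1$, i.e.\ $d\equiv 0$ or $1 \pmod{n+1}$; the converse is the observation that the factor $2(d-j)$ first vanishes exactly at $j=d$ on the relevant chain. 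I checked the recurrences and the index bookkeeping (including the boundary equations at $0\le m\le n-2$ and at $d-1\le m\le d+n-1$) and they are right. What the paper's route buys is brevity and the fact that the same change of variables disposes of \eqref{eqcano:2} immediately (Proposition \ref{pr:w_int}); what your route buys is self-containedness (no appeal to Martinet--Ramis) plus, as a by-product, the explicit structure of $P_{d,n}$ supported on an arithmetic progression mod $n+1$ --- precisely the formulas the paper states afterwards with only the remark that they ``can be found by a Frobenius-like method.'' Your argument is essentially that Frobenius computation carried out in full.
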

\noindent\textit{Proof:} The differential equation \eqref{eqcano:1}, is transformed into the Whittaker differential equation,
 \begin{equation}
 \mathcal{W}''=\left(\frac{1}{4}-\frac{\frac{-2d-n}{2n+2}}{z}+\frac{4(\frac{1}{2n+2})^{2}-1}{4z^{2}}\right)\mathcal{W},
\end{equation}  
through the change of variables $z=\frac{2}{n+1}x^{n+1}$, $y=z^{-\frac{n}{2n+2}}\mathcal{W}$. Applying Martinet-Ramis theorem we have that 
\begin{equation*}
\pm\frac{-2d-n}{2n+2}\pm\frac{1}{2n+2}=\frac{1}{2}+k,\hspace{4pt}k\in\z_{\geq 0},
\end{equation*}
which left only two posibilities, $d=(n+1)k$ or $d=(n+1)k+1$.\qed

\mbox{}\\
\noindent It is easy to see that the change of variables made in above proof also transform the equation \eqref{eqcano:2} into a Whittaker equation. Nevertheless this new equation will have parameters $\kappa=\frac{2d+n}{2n+2}$ and $\mu=\frac{1}{2n+2}$. So via Martinet-Ramis theorem, see \cite{MR89}, we can enunciate the following result analogous to the previous proposition. 
\begin{proposition}\label{pr:w_int}
The differential equation \eqref{eqcano:2} is integrable in the liouvillian sense if and only if, $d=(n+1)k$ or $d=(n+1)k+1$ where $k$ is a non-negative integer.
\end{proposition}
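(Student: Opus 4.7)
My plan is to mirror exactly the structure of the proof of the preceding proposition, which was already shown to work for \eqref{eqcano:1}. The authors have in fact almost written the proof in the paragraph between the two propositions: they point out that the same change of variables $z=\tfrac{2}{n+1}x^{n+1}$, $y=z^{-n/(2n+2)}\mathcal{W}$ transforms \eqref{eqcano:2} into a Whittaker equation, but now with parameters $\kappa=\tfrac{2d+n}{2n+2}$ (opposite sign, since the coefficient of $x^{n-1}$ has flipped) and $\mu=\tfrac{1}{2n+2}$. So the first step I would take is to carry out this change of variables explicitly in order to double-check the asserted values of $\kappa$ and $\mu$; this is the same short computation as in the preceding proof with one sign flipped.

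Once the Whittaker form is in hand, the second step is to invoke the Martinet--Ramis theorem \cite{MR89}, which characterizes the Liouville integrability of a Whittaker equation by the condition
$$\pm\kappa\pm\mu=\tfrac{1}{2}+k,\qquad k\in\mathbb{Z}_{\geq 0}.$$
Substituting our values gives
$$\pm\frac{2d+n}{2n+2}\pm\frac{1}{2n+2}=\tfrac{1}{2}+k.$$
Note that because of the outer $\pm$ signs, this is literally the same collection of four equations as the one obtained in the previous proposition (the sign flip on $\kappa$ is absorbed by the external $\pm$); this immediately yields, with no further work, exactly the same integrability locus in $d$.

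To finish I would still carry out the elementary case analysis. Multiplying through by $2n+2$ reduces the condition to $\pm(2d+n)\pm 1=(n+1)(2k+1)$. The two sign choices $-(2d+n)\pm 1 = (n+1)(2k+1)$ force $d<0$ and are discarded; the remaining two choices $+(2d+n)+1=(n+1)(2k+1)$ and $+(2d+n)-1=(n+1)(2k+1)$ give $d=(n+1)k$ and $d=(n+1)k+1$ respectively, which is the statement of the proposition.

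I do not foresee a genuine obstacle: the only nontrivial input is Martinet--Ramis (already cited and used in the previous proof), and the rest is a short computation together with a trivial case analysis. The main subtlety worth flagging is just that one must confirm the non-negativity of $k$ is the correct integrality regime inherited from Martinet--Ramis, and that the two discarded sign choices yield no admissible $d\in\mathbb{Z}_{\geq 0}$; but both checks are immediate.
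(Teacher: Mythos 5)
Your proposal is correct and follows exactly the route the paper takes: the same change of variables $z=\tfrac{2}{n+1}x^{n+1}$, $y=z^{-n/(2n+2)}\mathcal{W}$ brings \eqref{eqcano:2} to Whittaker form with $\kappa=\tfrac{2d+n}{2n+2}$, $\mu=\tfrac{1}{2n+2}$, and the Martinet--Ramis condition $\pm\kappa\pm\mu=\tfrac12+k$ reduces, after discarding the two sign choices forcing $d<0$, to $d=(n+1)k$ or $d=(n+1)k+1$. In fact the paper leaves this as a remark preceding the proposition rather than writing the proof out, so your explicit case analysis is, if anything, slightly more complete than what appears in the text.
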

\noindent Moreover, the solutions to the equation \eqref{eqcano} can be explicitly written as
\begin{equation}
\begin{split}
y_{d,n}(x)&=P_{d,n}(x)e^{\frac{x^{n+1}}{n+1}},\hspace{4pt}\textit{if}\hspace{4pt}\mu=2d+n,\hspace{4pt}\textit{or}\\
y_{d,n}(x)&=P_{d,n}(x)e^{-\frac{x^{n+1}}{n+1}},\hspace{4pt}\textit{if}\hspace{4pt}\mu=-(2d+n),
\end{split}
\end{equation}
where the polynomials $P_{d,n}$ can be  find by a Frobenius-like method. Having said that, it is a tedious process. In any case, for $d=(n+1)k$ we have that
\begin{equation}
    P_{d,n}(x)=x^{d}+\sum_{j=n+1}^{d}\zeta_{j}x^{d-j}, \hspace{4pt}\textit{where}\hspace{4pt}\zeta_{j}=0 \hspace{4pt}\textit{for}\hspace{4pt}j\neq (n+1)m.
\end{equation}

On the other hand, for  $d=(n+1)k+1$ 

\begin{equation}
    P_{d,n}(x)=x^{d}+\sum_{j=n+2}^{d}\zeta_{j}x^{d+1-j}, \hspace{4pt}\textit{where}\hspace{4pt}\zeta_{j}=0 \hspace{4pt}\textit{for}\hspace{4pt}j\neq (n+1)m+1.
\end{equation}

\begin{table}[ht]
\centering
\begin{tabular}{|l|l|}
\hline
 $\zeta_{j}$& $j=(n+1)m$, $j=(n+1)m+1$  \\ \hline
 $\mu=2d+2$& $\prod_{r=1}^{m}-\frac{(d+2-r(n+1))(d+3-r(n+1))}{-2(d+2-r(n+1)-n)-2d}$   \\ \hline
 $\mu=-2d-2$& $\prod_{r=1}^{m}-\frac{(d+2-r(n+1))(d+3-r(n+1))}{2(d+2-r(n+1)-n)+2d}$ \\ \hline
\end{tabular}
\end{table}

\begin{corollary}\label{cor:no_empty}
For any pair $(n,d)$ of degrees with $d\equiv 0$ or $d\equiv 1$ mod $(n+1)$ there exist  a monic polynomial $M(x)$ of degree $2n$ such that the equation
\begin{equation}
    y''=M(x)y
\end{equation}
has a polynomial-hyperexponential solution of exponential degree $n+1$ and polynomial degree $d$; therefore  $\mathbb L_{2n,d}$ is non-empty.
\end{corollary}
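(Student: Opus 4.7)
The plan is to read this corollary off directly from the analysis of the canonical equation \eqref{eqcano} carried out in the two preceding propositions. Given $(n,d)$ with $d\equiv 0$ or $d\equiv 1 \pmod{n+1}$, write $d=(n+1)k$ or $d=(n+1)k+1$ for some non-negative integer $k$, and choose $\mu=2d+n$ (or, symmetrically, $\mu=-(2d+n)$). Then
$$M(x)=x^{2n}+\mu x^{n-1}$$
is a monic polynomial of degree $2n$, so it represents a point of $\m_{2n}$.

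By Proposition \ref{pr:w_int} (or its companion for \eqref{eqcano:1}) the equation $y''=M(x)y$ is integrable in the Liouvillian sense, and the explicit formulas displayed just above the corollary give a solution
$$y_{d,n}(x)=P_{d,n}(x)\,e^{\pm x^{n+1}/(n+1)},$$
where the sign in the exponent is opposite to that of $\mu$ and $P_{d,n}(x)=x^d+(\text{lower-order terms})$ is the monic polynomial of degree $d$ whose coefficients are read off from the table. This $y_{d,n}$ is a polynomial-hyperexponential function in the sense of the initial definition: the exponent $\pm x^{n+1}/(n+1)$ is the antiderivative of $A(x)=\pm x^n$, which has degree $n+1$ and yields the claimed exponential degree, while the polynomial factor $P_{d,n}$ has polynomial degree $d$. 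Hence $M(x)\in \l'_{2n,d}\subset \l_{2n,d}$, and in particular $\l_{2n,d}\neq\emptyset$.

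The argument is essentially a repackaging and I do not anticipate any real obstacle. The only two points worth verifying carefully are: (i) that $P_{d,n}$ genuinely has degree $d$ and not lower, which is immediate from its explicit leading term $x^d$ with coefficient $1$ in the Frobenius-type description given above the corollary; and (ii) that the Martinet--Ramis condition on the Whittaker parameters obtained after the change of variables $z=\tfrac{2}{n+1}x^{n+1}$, $y=z^{-n/(2n+2)}\mathcal{W}$ really coincides with the congruence $d\equiv 0,1\pmod{n+1}$, which is precisely what was established in the two propositions. The corollary then follows by citation, with $M(x)=x^{2n}\pm(2d+n)x^{n-1}$ serving as the explicit witness.
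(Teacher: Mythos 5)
Your argument is correct and coincides with the paper's own (implicit) justification: the corollary is read off directly from Proposition \ref{pr:w_int}, its companion for equation \eqref{eqcano:1}, and the explicit solutions $y_{d,n}=P_{d,n}(x)\,e^{\pm x^{n+1}/(n+1)}$, with $M(x)=x^{2n}\pm(2d+n)x^{n-1}$ serving as the witness. One small slip: according to the paper's formulas the sign of the exponent agrees with that of $\mu$ (one gets $e^{+x^{n+1}/(n+1)}$ when $\mu=2d+n$), not the opposite as you state, though this does not affect the conclusion.
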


\section{Analysis of auxiliary equations}\label{auxi}

We refer to equations  \eqref{auxeq:1} and \eqref{auxeq:2} as \emph{auxiliary equations} for equation \eqref{eq:3}. As it is stated in Theorem \ref{polychar} the existence of a Liouvillian solution of equation \eqref{eq:3} depends of the existence of a polynomial solution of the auxiliary equations. In what follows we will show that conditions for the existence of a polynomial solution $P_d$ of given degree is algebraic in the coefficients of $A(x)$ and $B(x)$, and therefore in the coefficients of $M(x)$.

\subsection{Asymptotic iteration method}

The asymptotic iteration method or AIM was introduced by H. Ciftci \textit{et al} in \citep{CHS} as a tool to solve homogeneous differential equations of the form 
\begin{equation}\label{miapreli}
y''=\ell_{0}y'+r_{0}y
\end{equation}
where $\ell_{0}$ and $r_{0}$ are smooth functions defined on a real interval. Nevertheless, the method is purely differential algebraic, so we can extent the result to differential rings of characteristic zero. By derivation of equation \eqref{miapreli} we obtain a sequence of differential equations,
\begin{equation}\label{miaseq}
y^{(j+2)}=\ell_{j}y'+r_{j}y
\end{equation}
where the sequences $\{\ell_j\}_{j\in\mathbb N}$ and $\{r_j\}_{j\in \mathbb N}$ are defined by the recurrence,
\begin{equation}\label{recurrence_lr}
\ell_{j+1}=\ell_{j}'+r_{j}+\ell_{0}\ell_{j}, \quad r_{j+1}=r_{j}'+r_{0}\ell_{j}.    
\end{equation}
and the sequence of obstructions,
$$\delta_j = r_j\ell_{j-1} - \ell_jr_{j-1}.$$
We say that the AIM \emph{stabilizes} at $p>0$ if $\delta_p = 0$. The following statement is a differential algebraic translation of \citep[Theorem 1]{SHC}.

\begin{theorem}\label{MIAExtension}
Let $\ell_{0}$ and $r_{0}$ be elements of a differential field $R$ of characteristic zero. If there exist $p>0$ such that 
\begin{equation}\label{eq:MIAalpha}
\frac{r_{p}}{\ell_{p}}=\frac{r_{p-1}}{\ell_{p-1}}:=\a,
\end{equation}
then differential equation \eqref{miapreli} has general solution,
\begin{equation}
y=u^{-1}(c_{2}+c_{1}\beta),\quad c_1,c_2\mbox{ arbitrary constants,}
\end{equation}
in the extension $R\langle\a, u, u^{-1} , v, \beta\rangle$ where $u, v,\beta$ are solutions of $u'=\a u$, $v'=\ell_{0}v$ y $\beta'=u^{2}v$, respectively,
\end{theorem}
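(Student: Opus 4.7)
The argument decomposes into two parts: first, extracting from the AIM stabilization a Riccati equation for $\alpha$; second, translating that Riccati equation into the general solution via reduction of order.

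For the first part, I would translate the stabilization identity $r_p/\ell_p = r_{p-1}/\ell_{p-1} = \alpha$ into a differential equation for $\alpha$ using \eqref{recurrence_lr}. Setting $\alpha_j := r_j/\ell_j$ and dividing both recurrence relations by $\ell_{j-1}$, the ratios $r_j/\ell_{j-1}$ and $\ell_j/\ell_{j-1}$ become affine expressions in $\alpha_{j-1}$, $\alpha_{j-1}'$, and the logarithmic derivative $\ell_{j-1}'/\ell_{j-1}$; concretely, using $\alpha_{j-1}' = r_{j-1}'/\ell_{j-1} - \alpha_{j-1}\,\ell_{j-1}'/\ell_{j-1}$ one gets
$$\frac{r_j}{\ell_{j-1}} = \alpha_{j-1}' + \alpha_{j-1}\frac{\ell_{j-1}'}{\ell_{j-1}} + r_0, \qquad \frac{\ell_j}{\ell_{j-1}} = \frac{\ell_{j-1}'}{\ell_{j-1}} + \alpha_{j-1} + \ell_0.$$
Computing $\alpha_j = (r_j/\ell_{j-1})/(\ell_j/\ell_{j-1})$ and imposing $\alpha_j = \alpha_{j-1} = \alpha$ at $j = p$, the $\ell_{p-1}'/\ell_{p-1}$ terms cancel and one is left with the Riccati equation
$$\alpha' = \alpha^2 + \ell_0 \alpha - r_0.$$

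For the second part, I would observe that this Riccati equation is precisely the linearization condition for $y_1 := u^{-1}$ (with $u' = \alpha u$) to solve \eqref{miapreli}: a direct computation gives $y_1' = -\alpha y_1$ and $y_1'' = (\alpha^2 - \alpha') y_1$, so $y_1'' - \ell_0 y_1' - r_0 y_1 = (\alpha^2 - \alpha' + \ell_0\alpha - r_0) y_1 = 0$. A linearly independent solution is then produced by the standard reduction of order substitution $y_2 = y_1 w$: since $y_1$ solves \eqref{miapreli}, the $w$-term drops and one is left with $y_1 w'' + (2 y_1' - \ell_0 y_1) w' = 0$, a first-order linear ODE for $w'$ whose solution is $w' = y_1^{-2} v$ with $v' = \ell_0 v$. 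Since $y_1^{-2} = u^2$, this gives $w = \beta$ with $\beta' = u^2 v$, hence $y_2 = u^{-1}\beta$. The Wronskian is $y_1^2 w' = v \ne 0$, so $y_1,y_2$ are independent and the general solution is $y = u^{-1}(c_2 + c_1 \beta)$, manifestly lying in the extension $R\langle \alpha, u, u^{-1}, v, \beta \rangle$.

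The main obstacle is step one: one has to handle the implicit nondegeneracy conditions ($\ell_{p-1}$ and $\ell_p$ must be nonzero in the field $R$ for the ratios to be defined) and detect the cancellation of the $\ell_{j-1}'/\ell_{j-1}$ term, which is what makes the resulting Riccati equation independent of the iteration index $p$. Once that is in hand, step two is the textbook one-solution-to-two reduction of order and is essentially routine.
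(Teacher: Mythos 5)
Your proof is correct, but it takes a genuinely different route from the paper's. The paper argues ``top down'': differentiating \eqref{miapreli} $p$ times gives $y^{(p+2)}=\ell_p y'+r_p y$ and $y^{(p+1)}=\ell_{p-1}(y'+\a y)$, so under \eqref{eq:MIAalpha} any solution satisfies $(y^{(p+1)})'=\frac{\ell_p}{\ell_{p-1}}y^{(p+1)}$; the recurrence \eqref{recurrence_lr} rewrites $\frac{\ell_p}{\ell_{p-1}}=\log(\ell_{p-1})'+\a+\ell_0$, whence $y^{(p+1)}=c_1\ell_{p-1}uv$, and dividing back by $\ell_{p-1}$ gives the first-order equation $y'+\a y=c_1uv$, which is integrated with the integrating factor $u$. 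You instead argue ``bottom up'': you first extract from the stabilization and the recurrence the Riccati equation $\a'=\a^2+\ell_0\a-r_0$ (your cancellation of the $\ell_{p-1}'/\ell_{p-1}$ terms is exactly the same cancellation that underlies the paper's identity for $\ell_p/\ell_{p-1}$, just performed at the level of $\a$ rather than of $y^{(p+1)}$), then verify directly that $u^{-1}$ solves \eqref{miapreli} and produce the second solution by reduction of order. Your version has the advantage of being purely verificational --- it exhibits two explicit solutions with nonvanishing Wronskian $v$, so it never has to divide by $y^{(p+1)}$ (which the paper's $\log(y^{(p+1)})'$ step tacitly assumes invertible, and which actually vanishes on the polynomial solutions of degree $\le p$ that motivate the whole method); the paper's version has the complementary advantage of showing directly that \emph{every} solution lies in the stated two-parameter family. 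Both arguments rely on the implicit nondegeneracy $\ell_{p-1}\ell_p\neq 0$ built into hypothesis \eqref{eq:MIAalpha}, which you correctly flag.
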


\begin{proof}
By derivation of equation \eqref{miapreli} we obtain,
$$y^{(p+2)} = \ell_py' + r_py,$$
and from there
$$\log(y^{(p+1)})' = 
\frac{\ell_p\left(y' + \frac{r_p}{\ell_p} y\right)}{\ell_{p-1}\left(y' + \frac{r_{p-1}}{\ell_{p-1}} y \right)}.$$
If condition \eqref{eq:MIAalpha} is satisfied, then we have
$$(y^{(p+1)})' = 
\frac{\ell_p}{\ell_{p-1}}y^{(p+1)}.$$
On the other hand, from the recurrence,
we have,
$$\frac{\ell_p}{\ell_{p-1}} = \log(\ell_{p-1})' + \alpha + \ell_0,$$
and replacing into the above equation we obtain,
$$(y^{(p+1)})' = (\log(\ell_{p-1})' + \alpha + \ell_0)y^{(p+1)}.$$
We have that $y^{(p+1)} = c_1\ell_{p-1}uv$ is a general solution for this equation and finally we obtain
$$y' + \alpha y = c_1uv$$
that yields the general solution of the statement.
\end{proof}

The AIM method tests whether the auxiliary equations have polynomial solution. The following statement is a differential algebraic translation of \citep[Theorem 2]{SHC}. There is no difference in the proof, so we refer the reader to the original text.

\begin{theorem}\label{ThMiaPoly} Let $\ell_{0},r_{0}$ be elements in a differential field $R$ of characteristic zero that contains $\c[x]$. 
\begin{enumerate}
\item[(i)] If \eqref{miapreli} has a polynomial solution of degree $p$, then $\delta_p = 0$
\item[(ii)] If $\ell_{p}\ell_{p-1}\neq 0$ and $\d_{p}=0$, then the differential equation \eqref{miapreli} has a polynomial solution of degree at most $p$.
\end{enumerate}
\end{theorem}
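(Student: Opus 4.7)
For part (i), the plan is to exploit the iterative identity $y^{(j+2)} = \ell_j y' + r_j y$ from \eqref{miaseq}. I would assume $y \in \c[x]$ has degree exactly $p \geq 1$, so that $y^{(p+1)} = y^{(p+2)} = 0$. Substituting $j = p-1$ and $j = p$ into \eqref{miaseq} then yields the homogeneous $2 \times 2$ linear system
\begin{align*}
\ell_{p-1}\, y' + r_{p-1}\, y &= 0, \\
\ell_p\, y' + r_p\, y &= 0.
\end{align*}
Since $\deg y = p \geq 1$ the column $(y', y)^{\top}$ is nonzero, so the coefficient determinant $\ell_{p-1} r_p - \ell_p r_{p-1} = -\delta_p$ must vanish. (If one takes $p = 0$, then $y$ is a nonzero constant and $y'' = \ell_0 y' + r_0 y$ reduces to $r_0 = 0$; under the natural convention $\ell_{-1} = 1$, $r_{-1} = 0$ this reads $\delta_0 = 0$.)

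For part (ii), I would proceed in two stages. First, under the hypothesis $\delta_p = 0$ with $\ell_p \ell_{p-1} \neq 0$, the quantity $\alpha := r_p/\ell_p = r_{p-1}/\ell_{p-1}$ is well-defined. Substituting $r_{p-1} = \alpha \ell_{p-1}$ and $r_p = \alpha \ell_p$ into the recurrences \eqref{recurrence_lr} and eliminating $\ell_{p-1}$ produces after a short manipulation the Riccati identity
$$\alpha' = \alpha^2 + \ell_0 \alpha - r_0,$$
which is precisely the Riccati equation associated with \eqref{miapreli}. Consequently any solution of the first-order equation $y' + \alpha y = 0$ automatically solves \eqref{miapreli}.

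The hard part will be the second stage: producing a genuine \emph{polynomial} solution of degree at most $p$. Half of this is free, since $y^{(p+1)} = \ell_{p-1}(y' + \alpha y)$ implies that any solution of $y' + \alpha y = 0$ satisfies $y^{(p+1)} \equiv 0$, so if it lies in $\c[x]$ at all its degree is automatically $\leq p$. The remaining task is to exhibit $\alpha$ as a logarithmic derivative $-P'/P$ of some $P \in \c[x]$ with $\deg P \leq p$. Here I would follow the differential-algebraic construction of \citep[Theorem 2]{SHC}, building $P$ explicitly from the iterates $\ell_{p-1}$ and $r_{p-1}$ and using the stabilization $\delta_p = 0$ to guarantee the cancellations needed for $P$ to be polynomial of the stated degree. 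Since that construction is purely algebraic, it transports without modification to the abstract differential field $R \supseteq \c[x]$ of the statement.
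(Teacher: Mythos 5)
The paper offers no proof of this theorem: it states that the argument is identical to \citep[Theorem 2]{SHC} and refers the reader there, so your proposal is already more detailed than the text it accompanies. Your part (i) is complete and correct: for a polynomial solution $y$ of exact degree $p\geq 1$, the two instances $y^{(p+1)}=\ell_{p-1}y'+r_{p-1}y=0$ and $y^{(p+2)}=\ell_{p}y'+r_{p}y=0$ of \eqref{miaseq}, together with $(y',y)\neq(0,0)$, force the coefficient determinant $\ell_{p-1}r_{p}-\ell_{p}r_{p-1}$ to vanish; note this determinant is $\delta_p$ itself, not $-\delta_p$, but the sign is immaterial. Your convention $\ell_{-1}=1$, $r_{-1}=0$ for $p=0$ is exactly what the paper's iteration $\varphi^{p+1}(\mathrm{Id})$ produces, so that case is also fine.

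In part (ii) the Riccati verification is correct (substituting $r_{p-1}=\alpha\ell_{p-1}$, $r_p=\alpha\ell_p$ into \eqref{recurrence_lr} and cancelling $\ell_{p-1}\neq 0$ does give $\alpha'=\alpha^2+\ell_0\alpha-r_0$), but the step you defer to \citep{SHC} --- ``building $P$ explicitly from the iterates'' so that $\alpha=-P'/P$ --- is asserted rather than carried out, and as written that is the one genuine gap in the proposal. The irony is that you have already proved everything needed to close it, so the outsourcing is unnecessary. Take a nonzero solution $y$ of $y'+\alpha y=0$ in a differential field extension of $R$ with field of constants $\c$ (such an extension exists by Picard--Vessiot theory; alternatively use the paper's Theorem \ref{MIAExtension} with $c_1=0$). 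By your Riccati identity $y$ solves \eqref{miapreli}, hence satisfies \eqref{miaseq}, hence $y^{(p+1)}=\ell_{p-1}(y'+\alpha y)=0$. Now the solution space of $y^{(p+1)}=0$ in any differential field extension of $\c(x)$ has dimension at most $p+1$ over the constants and already contains the $\c$-independent solutions $1,x,\dots,x^{p}$; therefore $y$ is a nonzero $\c$-linear combination of these, i.e.\ a polynomial of degree at most $p$, and no separate construction of $P$ is required. With that observation inserted, your argument is a complete and correct proof of both parts.
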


\subsection{Liouvillian solutions by means of AIM}

Let us proceed to the AIM of auxiliary equations \eqref{auxeq:1} and \eqref{auxeq:2}. For equation \eqref{auxeq:1} we should start with $\ell_0^+ = -2A(x)$ and $r_0^+ = B(x) - A'(x)$. By the recurrence
law \eqref{recurrence_lr} we have a sequence: 
$$ 
\left[\begin{matrix} \ell^+_{p+1} \\  r^+_{p+1} \end{matrix}\right] = 
\left[\begin{matrix} \ell^+_p \\  r^+_p \end{matrix}\right]' + 
\left[\begin{matrix} -2A(x) & 1  \\  B(x)-A'(x) & 0 \end{matrix}\right]\left[\begin{matrix} \ell^+_p \\  r^+_p \end{matrix}\right]
$$
A condition for the existence of a polynomial solution of degree at most $p$ of \eqref{auxeq:1} is the vanishing of the polynomial $\delta_p^+ = r^+_p\ell^+_{p-1} - r_{p-1}^+\ell^+_{p}$. We proceed analogously with equation \eqref{auxeq:2} obtaining sequences of polynomials $r^-_p$, $\ell^-_p$ and $\delta^-_p$.

In order to model this process, let us consider $\mathbb Q\{a,b\}$ the ring of differential polynomials in two differential variables $a,b$. We may consider the following $\mathbb Q$-linear differential operator in the space of $2$ by $2$ matrices.
$$\varphi\colon {\rm Mat}_{2\times 2}(\mathbb Q\{a,b\}) \to {\rm Mat}_{2\times 2}(\mathbb Q\{a,b\}),
\quad C \mapsto \varphi(C) =  C' + \left[\begin{matrix} -2a & 1  \\  b-a' & 0 \end{matrix}\right]C$$
We consider the iterations of this map. If we give to the differential variables $a$, $b$ the values of the polynomials $A(x)$ and $B(x)$ we obtain:
$$(\varphi^{p+1}({\rm Id}))(A(x),B(x)) =  \left[\begin{matrix} \ell_p & \ell_{p-1}  \\  r_p & r_{p-1} \end{matrix}\right].$$
Let us define the sequence of universal differential polynomials,
$$\Delta_p = - \det(\varphi^{p+1}({\rm Id})) \in \mathbb Q\{a,b\}.$$

\begin{table}[h]
\centering
\begin{tabular}{|l|l|}
\hline
$\Delta_0$ &  $b-a'$\\ \hline
$\Delta_1$ & $2 a \left(a''-b'\right)+4 b a'-3 (a')^2-b^2$ \\ \hline
$\Delta_2$ &  
$\begin{aligned}
   & -9 b^2 a'-15 (a')^3-2 (-3 a'' b'+2 (a^2 (a^{(3)}-b'') +(a'')^2) +(b')^2) \\
   &+b (-a^{(3)}+2 a (3 b'-5 a'')+23 (a')^2+b'')+a' (3 a^{(3)}-2 a (7 b'-9 a'')-3 b'')+b^3

\end{aligned}$
 
\\ \hline
$\Delta_3$ & 
$\begin{aligned}
&2 (2 a''+2 a (b-4 a')+4 a^3-b') (a^{(4)}-4 a^2 (b'-a'')+b (10 a''-4 b')+10 a' b'+8 a^3 (b-a')\\
&+2 a (-a^{(3)}-14 b a'+12 (a')^2+b''+2 b^2)-16 a' a''-b^{(3)})-(-5 a^{(3)} +a (26 a''-10 b')\\
&+12 a^2 (b-5 a')-10 b a'+21 (a')^2 +16 a^4+3 b''+b^2) (-a^{(3)}-2 a (b'-a'')+4 a^2 (b-a')\\
&-6 b (a')+5 a'^2+b''+b^2)
\end{aligned}$
\\ \hline
\end{tabular}
\caption{First values of the universal differential polynomials $\Delta_n$.}
\end{table}

As we will see this sequence $\{\Delta_d\}_{d\in\mathbb N}$ of differential polynomials governs the Liouvillian integrability of equation \eqref{eq:3} for any even degree $2n$ of $M(x)$.

\begin{theorem}\label{seqdelta}
Equation \eqref{eq:3} with $M(x) = A(x)^2 + B(x)$ has a polynomial-hyperexponential solution of polynomial degree $d$ if and only if,
$$b_{n-1}^2 = (n +2d)^2 \quad\mbox{and} \quad \Delta_{d}(A(x),B(x))\Delta_d(-A(x),B(x))=0.$$
Therefore $\mathbb L'_{2n,d}$ is an algebraic subvariety of $\mathbb M_{2n}$ contained in the union of the irreducible hypersurfaces of equations:
\begin{equation*}
b_{n-1}=2d+n,\quad -b_{n-1}=2d+n.
\end{equation*}
\end{theorem}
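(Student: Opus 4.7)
The plan is to combine Theorem \ref{polychar}, which translates the existence of a polynomial-hyperexponential solution of \eqref{eq:3} into the existence of a monic polynomial solution of degree $d$ of one of the auxiliary equations \eqref{auxeq:1} or \eqref{auxeq:2}, with the AIM obstruction machinery of Theorem \ref{ThMiaPoly}. The crucial identification is $\Delta_d(A(x),B(x))=\delta_d^+$ and $\Delta_d(-A(x),B(x))=\delta_d^-$. Indeed, \eqref{auxeq:1} is of AIM shape $y''=\ell_0 y'+r_0 y$ with $\ell_0^+=-2A$, $r_0^+=B-A'$, and the recurrence \eqref{recurrence_lr} coincides with the action of $\varphi$ after the specialization $a\mapsto A$, $b\mapsto B$; a straightforward induction shows that $\varphi^{p+1}(\mathrm{Id})$, evaluated at $(A,B)$, has entries $\ell_p^+,\ell_{p-1}^+,r_p^+,r_{p-1}^+$, whose negated determinant is precisely $\delta_p^+$. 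Equation \eqref{auxeq:2} is \eqref{auxeq:1} with $A$ replaced by $-A$, which corresponds to the substitution $a\mapsto -A$, $b\mapsto B$ in $\varphi$, yielding the analogous identity for $\delta_p^-$.

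Next I would deduce the indicial condition on $b_{n-1}$ by a direct leading-term analysis. Writing a monic $P(x)=x^d+O(x^{d-1})$ and substituting into \eqref{auxeq:1}, the coefficient of $x^{n+d-1}$ is $2d-(b_{n-1}-n)$, which must vanish, forcing $b_{n-1}=2d+n$; for \eqref{auxeq:2} the same computation gives $b_{n-1}=-(2d+n)$, so that jointly $b_{n-1}^2=(n+2d)^2$. The forward direction of the biconditional is then immediate: by Theorem \ref{polychar} a polynomial-hyperexponential solution of degree $d$ yields a monic polynomial solution of degree $d$ of one of \eqref{auxeq:1}, \eqref{auxeq:2}, and Theorem \ref{ThMiaPoly}(i) applied to the corresponding equation, together with the leading-term computation, supply both required conditions.

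For the converse, assume $b_{n-1}=2d+n$ and $\Delta_d(A,B)=0$ (the case $b_{n-1}=-(2d+n)$, $\Delta_d(-A,B)=0$ is symmetric). The indicial analysis forces every nonzero polynomial solution of \eqref{auxeq:1} under this constraint to have degree exactly $d$, so Theorem \ref{ThMiaPoly}(ii) supplies the required monic polynomial solution and Theorem \ref{polychar} finishes the argument. The expected obstacle is the nondegeneracy hypothesis $\ell_d^+\ell_{d-1}^+\neq 0$ in Theorem \ref{ThMiaPoly}(ii): when it fails the AIM converse is not available off-the-shelf. I would circumvent this by a direct linear-algebra computation, substituting $P(x)=x^d+\sum_{k=0}^{d-1}c_k x^k$ into \eqref{auxeq:1} under the indicial constraint to obtain a square linear system in $(c_0,\dots,c_{d-1})$ whose determinant agrees with $\Delta_d(A,B)$ up to a nonzero scalar, thus establishing the biconditional independently of the degeneracy assumption. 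Once the biconditional is in hand, the structural statement is immediate: $\l'_{2n,d}$ is the zero locus in $\m_{2n}$ of the polynomial relations $b_{n-1}^2-(n+2d)^2$ and $\Delta_d(A,B)\Delta_d(-A,B)$, hence an algebraic subvariety contained in the union of the two hyperplanes $b_{n-1}=\pm(2d+n)$.
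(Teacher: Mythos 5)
Your overall strategy coincides with the paper's: identify $\Delta_d(\pm A(x),B(x))$ with the AIM obstructions $\delta_d^{\pm}$ of the auxiliary equations \eqref{auxeq:1}--\eqref{auxeq:2}, then chain Theorem \ref{polychar} with Theorem \ref{ThMiaPoly}. The forward direction, the identification of $\varphi^{p+1}(\mathrm{Id})$ with the matrix of $\ell$'s and $r$'s, and the indicial computation forcing $b_{n-1}=\pm(2d+n)$ and forcing any nonzero polynomial solution to have degree exactly $d$, are all sound (and in places more explicit than the paper).

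The genuine problem is your treatment of the hypothesis $\ell_d^{\pm}\ell_{d-1}^{\pm}\neq 0$ in Theorem \ref{ThMiaPoly}(ii). The fallback you propose for the case where it fails does not work as stated: substituting $P=x^d+\sum_{k<d}c_kx^k$ into \eqref{auxeq:1} does \emph{not} produce a square system in $(c_0,\dots,c_{d-1})$. As recorded in the proof of Proposition \ref{pr:codim}, the coefficient matrix has size $(d+n)\times(d+1)$, so after discarding the indicial equation one is left with $d+n-1$ linear conditions on $d$ unknowns; for $n\geq 2$ solvability is governed by a rank condition, not by a single determinant. Relatedly, $\Delta_d(A(x),B(x))$ is a polynomial in $x$ whose vanishing means the simultaneous vanishing of all its $x$-coefficients, so it cannot ``agree up to a nonzero scalar'' with the determinant of a scalar matrix. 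Fortunately the workaround is unnecessary, and this is exactly the point the paper makes: the nondegeneracy hypothesis never fails for the auxiliary equations. Indeed $\ell_0^{\pm}=\mp 2A(x)$ has degree $n\geq 1$, strictly larger than $\deg r_0^{\pm}=\deg(B\mp A')\leq n-1$, and an easy induction on the recurrence \eqref{recurrence_lr} gives $\deg \ell_j^{\pm}=(j+1)n$ with leading coefficient $(\mp 2)^{j+1}$ and $\deg r_j^{\pm}\leq (j+1)n-1$; hence $\ell_j^{\pm}\neq 0$ for every $j$ and Theorem \ref{ThMiaPoly}(ii) applies off the shelf. Replace your linear-algebra fallback by this degree argument and the proof closes along the same lines as the paper's.
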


\begin{proof}
Note that, by definition of the sequence $\Delta_d$ we have $\delta^+_d = \Delta_d(A(x),B(x))$. Analogously, the application of the AIM to equation \eqref{auxeq:2} produces an obstruction, $\delta^-_d$. Note that, because of the symmetry between equations \eqref{auxeq:1} and \eqref{auxeq:2} $\delta^-_d = \Delta_d(-A(x),B(x))$.

We need only to check that $\ell_{d-1}^\pm\ell_{d}^\pm\neq 0$ for the auxiliar equations. This comes easily from the fact that $\ell_0^\pm = \pm 2A(x)$ is of bigger degree than $r_0 = B\pm A'$. Note that $\Delta_d(A(x),B(x))\Delta_d(-A(x),B(x))$ is a polynomial in $x,a_0,\ldots,a_{n-1},b_0,\ldots,b_{n-1}$. Its coefficients as a polynomial in $x$ are the algebraic equations of the restricted spectral variety $\mathbb L'_{2n,d}$ in $\mathbb M_{2n}$.
\end{proof}

\begin{example}\label{ex:quadratic}
As a first example of AIM applications, let us consider an equation on $\mathbb{M}_{2}$
\begin{equation}
    y''= ((x+a_{0})^2+ b_0)y.
\end{equation}
An elementary traslation as $x\mapsto x+a_0$ reduces the determination of $\mathbb{L'}_2$ structure to an analysis of liouvillian-integrability conditions for quantum harmonic oscillator
\begin{equation}
    y''=(x^2 + b_0)y.
\end{equation}
These conditions are $b_0^2 = (2d+1)^2$ and $\Delta_d (x,b_0)\Delta_d (-x,b_0)=0$. It is easy to verify that $\Delta_d (x,b_0)=\Delta_d (-x,b_0)=2^{d+1}\prod_{k=0}^{d} d-k$. Therefore,
\begin{equation}
    \mathbb{L'}_{2,d} \,\, :\,\, (b_0 + 2d+1)(b_0 - 2d-1)=0
\end{equation}
\end{example}

Let us note that for a given equation $\m_{2n}$, conditions $b_{n+1} = 2d + n$ and $b_{n+1} = -2d - n$ are mutually exclusive. In the first case, auxiliary equation \eqref{auxeq:1} may have a polynomial solution but not \eqref{auxeq:2}. The opposite occurs in the second case. Therefore, we decompose the spectral variety $\l_{2n,d}'$ as the disjoint union of two components $\mathbb{L'}_{2n,d}=\mathbb{L}_{2n,d}^{+} \cup \mathbb{L}_{2n,d}^{-}$. The first component $\mathbb{L}_{2n,d}^{+}$ correspond to equations whose auxiliary equation \eqref{auxeq:1} has a polynomial solution of degree $d$ and the second component $\mathbb{L}_{2n,d}^{-}$ correspond to equations whose  auxiliary equation \eqref{auxeq:2} has a polynomial solution of degree $d$.

\begin{definition}
$\mathbb{L'}_{2n,d}=\mathbb{L}_{2n,d}^{+} \cup \mathbb{L}_{2n,d}^{-}$, where
\begin{equation*}
    \mathbb{L}_{2n,d}^{+}=\begin{cases}
    b_{n-1}=2d+n\\
    \Delta_d(A(x),B(x))=0,
    \end{cases}
\end{equation*}
and
\begin{equation*}
    \mathbb{L}_{2n,d}^{-}=\begin{cases}
    b_{n-1}=-2d-n\\
    \Delta_d(-A(x),B(x))=0.
    \end{cases}
\end{equation*}
\end{definition}

As in the previous Example \ref{ex:quadratic} it is always possible to get rid of the coefficient $a_{n-1}$ by means of a translation in the $x$ axis. Therefore is convenient to consider the sets,
 $$V_{2n,d}^{\pm}=\{a_{n-1}=0\}\cap\mathbb{L}_{2n,d}^{\pm}.$$ 
whose equations are easier to describe. For instance, in $\mathbb{M}_4$ and $\m_6$ we restrict our analysis to equations of the forms:
\begin{equation}
    y''=((x^2 + a_0)^2 +b_1 x + b_0)y.
\end{equation}
and
\begin{equation}
    y''=((x^3 +a_1 x + a_0)^2 +b_2 x^2 + b_1 x + b_0)y
\end{equation}
respectively. The following calculations of the equations of $V^+_{2n,d}$ for $n=2,3$ and small values of $d$, in Tables \ref{t:V4} and \ref{t:V6} is performed by means of the universal differential polynomials $\Delta_d$.

\begin{table}[ht]
\centering
\begin{tabular}{|l|l|lll}
\cline{1-2}
$V_{4,0}^{+}$ & $\begin{cases}
b_1=2 \\
b_0=0
\end{cases}$&  &  &  \\ \cline{1-2}
$V_{4,1}^{+}$ & $\begin{cases}
b_1= 4\\
b_0^2 +4a_0=0
\end{cases}$ &  &  &  \\ \cline{1-2}
$V_{4,2}^{+}$ & $\begin{cases}
b_1= 6 \\
b_0^3+16 a_0 b_0-16=0
\end{cases}$ &  &  &  \\ \cline{1-2}
$V_{4,3}^{+}$ & $\begin{cases}
b_1= 8\\
b_0^4 + 40a_0 b_0^2-96b_0+144a_0^2=0
\end{cases}$ &  &  &  \\ \cline{1-2}
$V_{4,4}^{+}$ & $\begin{cases}
b_1= 10\\
b_0^5+80a_0 b_0^3-336b_0^2 +1024a_0^2 b_0 -3072a_0=0
\end{cases}$ &  &  &  \\ \cline{1-2}
$V_{4,5}^{+}$ & $\begin{cases}
b_1= 12\\
b_0^6-140a_0 b_0^4+896b_0^3 -4144a_0^2 b_0^2 +28160a_0 b_0-14400a_0^3-25600=0
\end{cases}$ &  &  &  \\ \cline{1-2}
$V_{4,6}^{+}$ & $\begin{cases}
b_1= 14\\
\begin{aligned}
b_0^7+224a_0 b_0^5-2016b_0^4 +12544a_0^2 b_0^3 -142848a_0 b_0^2+147456a_0^3 b_0&+288000b_0\\
&-884736a_0^2=0
\end{aligned}
\end{cases}$ &  &  &  \\ \cline{1-2}
\end{tabular}
\caption{Algebraic equations of restricted spectral varieties $V^+_{4,d} = \l^+_{4,d} \cap \{a_1 = 0\}$
for small values of $d$.
}\label{t:V4}
\end{table}

\begin{table}[ht]
\centering
\begin{tabular}{|l|l|}
\hline
$V_{6,0}^{+}$  & $\begin{cases}b_2= 3\\
b_1=0\\
b_0-a_1=0
\end{cases}$\\ \hline
$V_{6,1}^{+}$  &  
$\begin{cases}b_2= 5\\
2 a_1 b_1-8 a_0-2 b_0 b_1=0\\
-6 a_1-b_1^2+2 b_0=0\\
4 a_1 b_0-2 a_0 b_1-3 a_1^2-b_0^2=0
\end{cases}$\\ \hline
$V_{6,2}^{+}$  &  $\begin{cases}b_2= 7\\
23 a_1^2 b_0-9 a_1 b_0^2-14 a_0 a_1 b_1+6 a_0 b_0 b_1-15 a_1^3-24 a_1+32 a_0^2+b_0^3-2 b_1^2+8 b_0=0\\
9 a_1^2 b_1-12 a_1 b_0 b_1+6 a_0 b_1^2+24 a_0 b_0-24 a_0 a_1+3 b_0^2 b_1-12 b_1=0\\
-3 a_1 b_1^2+36 a_1 b_0+24 a_0 b_1-30 a_1^2-6 b_0^2+3 b_0 b_1^2-48=0\\
22 a_1 b_1-32 a_0+b_1^3-6 b_0 b_1=0
\end{cases}$\\ \hline
$V_{6,3}^{+}$  &  $\begin{cases}b_2= 9\\
\begin{aligned}
&176 a_1^3 b_0-86 a_1^2 b_0^2-116 a_0 a_1^2 b_1+16 a_1 b_0^3-20 a_1 b_1^2+264 a_1 b_0+80 a_0 a_1 b_0 b_1\\
&-12 a_0^2 b_1^2-144 a_0^2 b_0
-12 a_0 b_0^2 b_1+120 a_0 b_1-105 a_1^4-372 a_1^2+432 a_0^2 a_1\\
&-b_0^4-36 b_0^2+8 b_0 b_1^2-288=0
\end{aligned}\\
\begin{aligned}
&60 a_1^3 b_1-92 a_1^2 b_0 b_1+56 a_0 a_1 b_1^2+192 a_0 a_1 b_0+36 a_1 b_0^2 b_1+96 a_1 b_1-48 a_0 b_0^2\\
&-24 a_0 b_0 b_1^2-288 a_0^2 b_1-144 a_0 a_1^2+576 a_0+8 b_1^3-4 b_0^3 b_1=0
\end{aligned}\\
\begin{aligned}
&-18 a_1^2 b_1^2+372 a_1^2 b_0-132 a_1 b_0^2+24 a_1 b_0 b_1^2-72 a_0 a_1 b_1-12 a_0 b_1^3-72 a_0 b_0 b_1\\
&-252 a_1^3-288 a_1+12 b_0^3-6 b_0^2 b_1^2+48 b_1^2+288 b_0=0
\end{aligned}\\
\begin{aligned}
&4 a_1 b_1^3-48 a_0 b_1^2+136 a_1^2 b_1-160 a_1 b_0 b_1+288 a_0 b_0-864 a_0 a_1-4 b_0 b_1^3+24 b_0^2 b_1\\
&+192 b_1=0
\end{aligned}\\
-52 a_1 b_1^2+144 a_0 b_1+120 a_1 b_0-252 a_1^2-b_1^4+12 b_0 b_1^2-12 b_0^2-288=0
\end{cases}$\\ \hline
\end{tabular}
\caption{Algebraic equations of restricted spectral varieties $V^+_{6,d} =  \l^+_{6,d}\cap \{a_1 = 0\}$
for small values of $d$.
}\label{t:V6}
\end{table}

\subsection{Codimension of the spectral variety}\label{spcvar}

As the degree in $x$ of the polynomials $\Delta_p(A(x),B(x))$ grows quickly with $p$ and the degree of $M(x) = A(x)^2 + B(x)$ it seems that the sets $\mathbb L_{2n,p}$ are smaller as $p$ grows. However, a direct analysis of the auxiliary equations allows us to bound the codimension of the spectral varieties $\mathbb{L}_{2n,d}$ in $\m_{2n}$. As we have seen before the algebraic equations for $\mathbb{L}_{2n,0}$ are well expressed by the obstruction $\Delta_{0}(a,b)=b-a'$, so henceforth we will consider $d>0$.

\begin{proposition}\label{pr:codim}
If $\mathbb{L'}_{2n,d}$ is not empty, then ${\rm codim}(\mathbb{L'}_{2n,d},\m_{2n})\leq n$.
\end{proposition}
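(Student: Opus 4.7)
The plan is to realize $\l'_{2n,d}$ as the image of an explicit incidence variety whose dimension is controlled by the affine dimension theorem. It is enough to prove the claim for the component $\l^+_{2n,d}$: the argument for $\l^-_{2n,d}$ is obtained by replacing $A$ with $-A$ in what follows, and a codimension $\leq n$ bound for either component implies the same bound for the union $\l'_{2n,d} = \l^+_{2n,d} \cup \l^-_{2n,d}$.

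Introduce the affine space $X = \c^n \times \c^d$ parameterizing pairs $(A, P_d)$ with $A$ monic of degree $n$ and $P_d$ monic of degree $d$, and let $Y \subseteq X$ be the subvariety cut out by the divisibility condition $P_d \mid P_d'' + 2AP_d'$. The remainder of the Euclidean division of $P_d'' + 2AP_d'$ by $P_d$ is a polynomial in $x$ of degree at most $d-1$, and thus contributes $d$ polynomial equations (its coefficients) in the $n+d$ coordinates of $X$. By the affine dimension theorem every irreducible component of $Y$ has dimension at least $n$, and $Y$ is non-empty by the hypothesis on $\l^+_{2n,d}$.

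Define the morphism $\Phi\colon Y \to \m_{2n}$ by $(A, P_d) \mapsto (A, B)$, where $B := A' + (P_d'' + 2AP_d')/P_d$. A leading-coefficient check shows $\deg B = n-1$ with $b_{n-1} = 2d+n$, so $\Phi$ actually lands in $\m_{2n}$, and by the characterization in Theorem \ref{polychar} we have $\Phi(Y) = \l^+_{2n,d}$. Hence it suffices to show that $\Phi$ has zero-dimensional fibers, since this would give $\dim \l^+_{2n,d} = \dim Y \geq n$ and therefore ${\rm codim}(\l^+_{2n,d},\m_{2n}) \leq n$.

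The decisive step, and the main obstacle, is the control of these fibers. For $(A, B) \in \l^+_{2n,d}$, the fiber $\Phi^{-1}(A, B)$ is the set of monic polynomial solutions of degree $d$ of the second-order linear ODE \eqref{auxeq:1}. The key observation is Abel's identity: the Wronskian $W$ of any two solutions satisfies $W' = -2AW$, so $W = c\exp(-2\int A\, dx)$, and since $A$ has positive degree $n$ the right-hand side is transcendental over $\c(x)$. Two linearly independent polynomial solutions would therefore produce a nonzero polynomial Wronskian equal to a transcendental function, which is impossible. Consequently the space of polynomial solutions of \eqref{auxeq:1} is at most one-dimensional, the monic polynomial of degree $d$ in it is unique when it exists, and each fiber of $\Phi$ is a single point. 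This closes the argument.
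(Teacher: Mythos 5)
Your proof is correct, but it takes a genuinely different route from the paper's. The paper argues directly inside $\m_{2n}$: it expands the auxiliary equation \eqref{auxeq:1} (resp.\ \eqref{auxeq:2}) into a linear system $M^{\pm}_{d,n}(A,B)\,p=0$ for the unknown coefficients of $P_d$, where $M^{\pm}_{d,n}$ is a $(d+n)\times(d+1)$ matrix with entries polynomial in the coordinates of $\m_{2n}$; since the disjointness of the spectral varieties rules out polynomial solutions of degree less than $d$, a $d\times d$ minor is invertible at any point of $\l'_{2n,d}$, and the rank-degeneracy condition guaranteeing a nontrivial kernel is then locally cut out by $n$ bordered determinants, which gives the codimension bound. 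You instead lift to the incidence variety $Y$ of pairs (equation, monic solution), bound $\dim Y\geq n$ from below by the affine dimension theorem ($d$ remainder equations in $n+d$ variables), and push the bound down along $\Phi$; the decisive finiteness of the fibers comes from Abel's identity together with the transcendence of $e^{-2\int A}$ over $\c(x)$, which forces the space of polynomial solutions of \eqref{auxeq:1} to be at most one-dimensional. Each approach buys something: the paper's produces the explicit local equations (the bordered determinants) of $\l'_{2n,d}$ that are exploited in the later computations of Section \ref{furt}, whereas yours avoids choosing a minor and any genericity discussion, and in addition establishes that the monic degree-$d$ solution is unique, so that $\Phi$ is a bijection of $Y$ onto $\mathbb{L}^{+}_{2n,d}$ --- a clean parametrization of the spectral variety that the paper does not record. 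One bookkeeping remark: the hypothesis is that $\l'_{2n,d}$ is nonempty, so the argument should be run on whichever of the two components $\mathbb{L}^{\pm}_{2n,d}$ is actually nonempty; as you observe, the bound for that component already bounds the codimension of the union.
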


\begin{proof}
Now, let us suppose that $P_{d}=\sum_{k=0}^{d}p_{k}x^k$ is a solution to one of the following algebraic equations

\begin{equation}
P_{d}''\pm 2AP_{d}'-(B\mp A')P_{d}=0
\end{equation}
where $A=x^n + \sum_{k=1}^{n}a_{n-k}x^{n-k}$ and $B=\sum_{k=1}^{n}b_{n-k}x^{n-k}$. Hence the coefficients of the  polynomial 

\begin{equation}
\begin{split}
&\sum_{k=2}^{d}k(k-1)p_{k}x^{k-2}\pm \left(2x^n + \sum_{k=1}^{n}2a_{n-k}x^{n-k}\right)\left(\sum_{k=1}^{d}kp_{k}x^{k-1}\right)\\
-&\left(\sum_{k=1}^{n}b_{n-k}x^{n-k}\mp nx^{n-1} + \sum_{k=1}^{n-1}(n-k)a_{n-k}x^{n-1-k}\right)\left(\sum_{k=0}^{d}p_{k}x^k\right)=0
\end{split}
\end{equation}
in $\mathbb{C}[x]$ vanish. This give place to a system of equations which are sufficient conditions for the existence of $P_{d}$,
\begin{equation}\label{syscond}
\begin{bmatrix}
a_1 -b_0 & 2a_2 & 2 & 0 & \cdots &  0& 0\\ 
2a_2 -b_1 & 3a_1 -b_0 & 4a_0   & 6 & \cdots &0  &0 \\ 
 &  &  &  & \ddots  &\ast & \ast\\ 
0 & 0 & 0 & 0 & \cdots & 2(d-1)+n-b_{n-1} & (2d+n-1)a_{n-1}-b_{n-2}\\ 
0 & 0 & 0 & 0 & \cdots & 0 &2d+n-b_{n-1} \\ 
\end{bmatrix}
\begin{bmatrix}
p_0\\ 
p_1\\ 
\vdots\\ 
p_d
\end{bmatrix}=0.
\end{equation} 

We will denote  the coefficient matrix of the system \eqref{syscond} by $M_{d,n}^{\pm}(A,B)$. Note  this matrix has size $(d+n)\times(d+1)$ and it also has the property 
\begin{equation}
M_{d,n}^{\pm}(A,B)=\left[\begin{array}{ccc|c}
 & & & 0\\
 &M_{d-1,n}^{\pm}(A,B) & &\vdots \\
 & && \ast \\ \hline
 0 &\cdots &0 &2d+n\pm b_{n-1} \\
\end{array}\right].
\end{equation}
\begin{remark}
As there is no solution $P$ of degree less than $d$, then $rk(M_{d-1,n}^{\pm}(A,B))=d$.
\end{remark}

In order to determinate the codimension of $\mathbb{L}_{2n,d}'$ around a point $(A_0 ,B_0 )$ we shall choose a suitable $d\times d$ submatrix $D$ of $M_{d-1,n}^{\pm}(A_0,B_0)$ such that its determinant is different from zero. In addition, the vanishing of the determinants of the matrices set by adding one of the remaining $n$ rows of $M_{d-1,n}^{\pm}(A_0,B_0)$ to $D$, generates $n$ conditional equations which guarantees the existence of a non-trivial solution to \eqref{syscond}. 
\end{proof}

\subsection{An example: case $n=3$}\label{furt}
As an useful example in order to illustrate further computes, specially for looking accurate spectral values on Schr\"odinger type problems, let us assume that $A(x)=x^{3}+a_{3,1}x^{2}+a_{3,2}x+a_{3,3}$ and $B(x)=b_{2,0}x^{2}+b_{2,1}x+b_{2,2}$. So, the analysis on previous subsection \ref{spcvar} for case $n=3$ can be summarized with the following proposition.

\begin{proposition}\label{exampletesis}
A necessary condition for equation
\begin{equation}\label{examplesextic}
    y''-(2x^{3}+2a_{3,1}x^{2}+2a_{3,2}x+2a_{3,3})y'-((b_{2,0}+3)x^2 + (2a_{3,1}+b_{2,1})x+a_{3,2}+b_{2,2})y=0
\end{equation}
in order to have a polynomial solution of degree $d$ is $b_{2,0}+3=-2d$ for $d=0,1,2,\dots$

\noindent On the other hand, sufficient conditions are coded by the solutions of the linear system associated to the matrix
\begin{equation}
M_{d}^{-}(A,B)=\begin{bmatrix}
\a_{0} &\b_{0}  &\g_{0}  &  &  &  &  &  \\ 
\zeta_{1} &\a_{1}  &\b_{1}  &\g_{1}  &  &  &  &   \\ 
\eta_{2} &\zeta_{2}  &\a_{2}  &\b_{2}  &\g_{2}  &  &  &   \\ 
      &\ddots &\ddots &\ddots &\ddots  & \ddots &  &   \\ 

                    &  &\eta_{d-2}&\zeta_{d-2}&\a_{d-2}  &\b_{d-2}  &\g_{d-2}\\ 
                   &  &       &\eta_{d-1}&\zeta_{d-1}&\a_{d-1}& \b_{d-1}  \\ 
                   &  &       &  &\eta_{d}  &\zeta_{d}  &\a_{d} \\
                          &  &       &  &0  &\eta_{d+1}  &\zeta_{d+1} 
\end{bmatrix} 
\end{equation}
where
\begin{equation*}
\begin{split}
\a_{k}&=-a_{3,2}(2k+1)-b_{2,2},\\
\b_{k}&=-2a_{3,3}(k+1),\\
\g_{k}&=(k+2)(k+1),\\
\zeta_{k}&=-2a_{3,1}k-b_{2,1},\\
\eta_{k}&=-2k-b_{2,0}+1,
\end{split}
\hspace{4pt}
\begin{split}
k=0,1,2,\dots\\ 
k=0,1,2,\dots\\ 
k=0,1,2,\dots\\ 
k=1,2,3\dots\\ 
k=2,3,4\dots
\end{split}
\end{equation*}
It creates a set of at most two polynomial equations in the variables $a_{3,0}$, $a_{3,1}$, $a_{3,2}$, $a_{3,3}$, $a_{2,0}$, $a_{2,1}$, $a_{2,2}$ which guarantees likewise a non-trivial solution to the associated system to $M_{d}^{-}(A,B)$ and a polynomial solution  to equation \eqref{examplesextic}. 
\end{proposition}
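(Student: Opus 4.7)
The plan is to specialise the analysis of Section \ref{spcvar} to the case $n=3$ and the auxiliary equation \eqref{auxeq:2}. First note that equation \eqref{examplesextic} is precisely $P_d''-2AP_d'-(B+A')P_d=0$ for the $A$ and $B$ of the statement. I would substitute the ansatz $P_d(x)=\sum_{k=0}^{d}p_k x^k$ (with $p_d\neq 0$) and equate to zero the coefficient of $x^j$ for each $j=0,\ldots,d+2$, producing $d+3$ linear equations in the $d+1$ unknowns $p_0,\ldots,p_d$.

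The coefficient of $x^{d+2}$ receives contributions only from $-2x^3P_d'$ and from $-(b_{2,0}+3)x^2P_d$, so its vanishing forces $(2d+b_{2,0}+3)p_d=0$. Since $p_d\neq 0$, this is exactly the necessary condition $b_{2,0}+3=-2d$ of the statement. The remaining $d+2$ equations, indexed by $0\le j\le d+1$, involve only $p_{j-2},p_{j-1},p_j,p_{j+1},p_{j+2}$ (with out-of-range entries omitted). Routine collection of the contributions of $P_d''$, $-2AP_d'$ and $-(B+A')P_d$ then produces exactly the pentadiagonal coefficients
\begin{equation*}
\gamma_j=(j+2)(j+1),\ \ \beta_j=-2a_{3,3}(j+1),\ \ \alpha_j=-a_{3,2}(2j+1)-b_{2,2},\ \ \zeta_j=-2a_{3,1}j-b_{2,1},\ \ \eta_j=-2j-b_{2,0}+1,
\end{equation*}
matching the definitions in the statement; the truncation of rows $j=d-1,d,d+1$ reflects the absence of $p_{d+1},p_{d+2}$. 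Hence the existence of a polynomial solution of degree $d$ amounts to $M_d^-(A,B)$ having a nontrivial kernel.

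It remains to translate this rank condition into polynomial equations in the $a_{3,i}$'s and $b_{2,i}$'s. Since $M_d^-(A,B)$ has size $(d+2)\times(d+1)$, the locus of matrices of rank $\le d$ has generic codimension $(d+2-d)(d+1-d)=2$. Following the local recipe in the proof of Proposition \ref{pr:codim}, one picks a nonsingular $d\times d$ minor of the upper block and imposes the vanishing of the determinants of its two $(d+1)\times(d+1)$ extensions obtained by adjoining each of the two surplus rows; this yields at most two polynomial equations, which together with the necessary condition already derived give the full set of sufficient conditions and realise the codimension bound $\le n=3$ of Proposition \ref{pr:codim}. The essentially only obstacle is the careful bookkeeping of the coefficient extraction in the second paragraph; the remainder is a direct specialisation of the general framework.
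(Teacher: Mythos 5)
Your proposal is correct and follows the same route as the paper, whose proof consists of the single remark that the proposition is the restriction to $n=3$ of the analysis in Section \ref{spcvar}; you simply carry out that restriction explicitly, and your coefficient extraction (the pentadiagonal entries $\alpha_j,\beta_j,\gamma_j,\zeta_j,\eta_j$, the top equation $-(2d+3+b_{2,0})p_d=0$ giving the necessary condition, and the two surplus-row determinants) checks out. No issues.
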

 \begin{proof}
 This is a restriction of the  analysis developed on section \ref{spcvar} to $n=3$.
 \end{proof}

Generically we can suppose that our $d\times d$ principal minor is different from zero, so the equations given by proposition \ref{exampletesis} are the determinants
\begin{equation}
\D_{d,3}^{1}(-A,B)=\begin{vmatrix}
\a_{0} &\b_{0}  &\g_{0}  &  &  &  &  &  \\ 
\zeta_{1} &\a_{1}  &\b_{1}  &\g_{1}  &  &  &  &   \\ 
\eta_{2} &\zeta_{2}  &\a_{2}  &\b_{2}  &\g_{2}  &  &  &   \\ 
      &\ddots &\ddots &\ddots &\ddots  & \ddots &  &   \\ 

                    &  &\eta_{d-2}&\zeta_{d-2}&\a_{d-2}  &\b_{d-2}  &\g_{d-2}\\ 
                   &  &       &\eta_{d-1}&\zeta_{d-1}&\a_{d-1}& \b_{d-1}  \\ 
                   &  &       &  &\eta_{d}  &\zeta_{d}  &\a_{d} 
\end{vmatrix} =0
\end{equation}
and
\begin{equation}
\D_{d,3}^{2}(-A,B)=\begin{vmatrix}
\a_{0} &\b_{0}  &\g_{0}  &  &  &  &  &  \\ 
\zeta_{1} &\a_{1}  &\b_{1}  &\g_{1}  &  &  &  &   \\ 
\eta_{2} &\zeta_{2}  &\a_{2}  &\b_{2}  &\g_{2}  &  &  &   \\ 
      &\ddots &\ddots &\ddots &\ddots  & \ddots &  &   \\ 

                    &  &\eta_{d-2}&\zeta_{d-2}&\a_{d-2}  &\b_{d-2}  &\g_{d-2}\\ 
                   &  &       &\eta_{d-1}&\zeta_{d-1}&\a_{d-1}& \b_{d-1}  \\ 
                   &  &       &  &0  &\eta_{d+1}  &\zeta_{d+1} 
\end{vmatrix} =0
\end{equation}
Several detailed examples of this equations can be found on \citep{CHSD, MTH}.

\subsection{Proof of Theorem \ref{th:main_structure} }

We can now state the proof, which follows easily from the other results. Statement (a) is a direct consequence of Proposition \ref{pr:w_int}. Statement (b) is a consequence of Theorems \ref{seqdelta} and proposition \ref{pr:codim}. Note that, from d'Alembert reduction, the codimension of $\mathbb L'_{2n,d}$ in $\mathbb M_{2n}$ coincide with that of $\mathbb L_{2n,d}$ in $\mathbb P_{2n}$. Statement (c) and (d) are also clear, as $\mathbb L'_{2n,d}$ is contained in the union of hyperplanes of equations $b_{n+1} = 2d+n$ and $b_{n-1} = -2d-n$. \qed

\section{Schr\"odinger Equation }\label{schr}

Let us consider the one dimensional stationary Shr\"odinger equation,
\begin{equation}\label{eq:sch}
\psi'' = (\lambda - U(x))\psi    
\end{equation}
with a polynomial potential $U(x)$. We say that the potential $U(x)$ is quasi-exactly solvable if there are some values of $\lambda$ for wich equation \eqref{eq:sch}  has a Liouvillian solution. This is equivalent to say that the line, 
$$\{\lambda-U(x) \,\colon\, \lambda\in\mathbb C\}\subseteq \mathbb M_{2n}$$
paremeterized by $\lambda$, intersects the spectral set $\mathbb L_{2n}$.

As it is well know, and we examined in Example \ref{ex:quadratic}, any quadratic potential is quasi-exactly solvable (and more over, exactly solvable). It is also clear that any quasi-exactly solvable potential is of even degree. Let us assume from now on that $U(x)$ is of degree $2n\geq 4$.

We consider the decomposition $-U(x) = A(x)^2+B(x)$ as in Theorem \ref{polychar}. We define the \emph{arithmetic condition} of $U(x)$ as the complex number,
$$d = \frac{|b_{n-1}|-n}{2}$$
where $b_{n-1}$ is the coefficient of $x^{n-1}$ the polynomial $B(x)$ appearing in the unique decomposition $-U(x) = A(x)^2 + B(x)$. Note that a necessary condition for $U(x)$ to be quasi exactly solvable is its arithmetic condition to be a non-negative integer. In such case the intersection between the line:
$$\{\lambda-U(x) \,\colon\, \lambda\in\mathbb C\}\subseteq \mathbb M_{2n}$$
and $\l_{2n}$ is confined to the spectral variety $\l_{2n,d}$. 

Let us consider the universal sequence of differential polynomials $\Delta_d\in \mathbb Q\{a,b\}$ as in Theorem \ref{seqdelta}. The following lemma allows us to bound the number of admissible values of energy (for which the Schr\"odinger equation admits a Liouvillian solution) of any quasi-exactly solvable polynomial potential. Let us make clear that by the degree of a differential polynomial $\Delta_d$ in the variable $b$ we mean its ordinary degree: that is we consider $a,a',a'',\ldots,b,b',b'',\ldots$ as an infinite set of independent variables. 

\begin{lemma}\label{lm:degree}
 The degree of $\Delta_d$ in the variable $b$ is at most $d+1$.
\end{lemma}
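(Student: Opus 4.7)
The plan is to bound the $b$-degrees of $\ell_p$ and $r_p$ in the AIM recurrence, then read off the bound on $\Delta_d$ from the determinantal identity
\[
\Delta_d = \ell_{d-1} r_d - \ell_d r_{d-1}
\]
built into the definition $\Delta_d = -\det(\varphi^{d+1}(\mathrm{Id}))$.

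Concretely, I will prove by induction on $p$ the two bounds
\[
\deg_b(\ell_p) \leq \lceil p/2 \rceil \quad\text{and}\quad \deg_b(r_p) \leq \lceil (p+1)/2 \rceil,
\]
where $\deg_b$ denotes the total degree in the family of independent indeterminates $b, b', b'', \ldots$. The base case $p=0$ is immediate from $\ell_0 = -2a$ and $r_0 = b - a'$. For the inductive step I use the recurrences
\[
\ell_{p+1} = \ell_p' + r_p + \ell_0\,\ell_p, \qquad r_{p+1} = r_p' + r_0\,\ell_p,
\]
together with the elementary remarks that differentiation with respect to $x$ leaves $\deg_b$ unchanged (the derivative of $b^{(j)}$ is another independent variable $b^{(j+1)}$ of the same $b$-degree one), multiplication by $\ell_0 = -2a$ does not affect $\deg_b$, and multiplication by $r_0 = b - a'$ raises $\deg_b$ by at most one. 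These yield
\[
\deg_b(\ell_{p+1}) \leq \max\{\deg_b(\ell_p), \deg_b(r_p)\}, \qquad \deg_b(r_{p+1}) \leq \max\{\deg_b(r_p), \deg_b(\ell_p) + 1\},
\]
and the claimed estimates close by direct verification in the two parities of $p$.

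The lemma then follows from a parity split applied to $\Delta_d = \ell_{d-1} r_d - \ell_d r_{d-1}$. When $d = 2k$ the bounds give $\deg_b(\ell_{d-1} r_d) \leq k + (k+1) = d+1$ and $\deg_b(\ell_d r_{d-1}) \leq k + k = d$; when $d = 2k+1$ the two contributions swap and yield $\deg_b(\ell_d r_{d-1}) \leq (k+1) + (k+1) = d+1$ and $\deg_b(\ell_{d-1} r_d) \leq k + (k+1) = d$. In either case $\deg_b(\Delta_d) \leq d+1$, as required.

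The induction itself is routine; the only place calling for care is precisely this parity bookkeeping, since the naive pairwise bound on the determinantal identity would produce roughly $2d$, and the reduction to the sharp value $d+1$ relies on the observation that at each step of the recurrence only one of the pair $(\ell_p, r_p)$ acquires a new factor of $b$.
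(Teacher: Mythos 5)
Your proof is correct and follows essentially the same route as the paper: an induction via the recurrence \eqref{recurrence_lr} giving $\deg_b(\ell_p)\leq\lceil p/2\rceil$ and $\deg_b(r_p)\leq\lceil (p+1)/2\rceil$ (the paper states the equivalent bounds $\tfrac{p+1}{2}$ and $\tfrac{p+2}{2}$), followed by summing degrees in the two products of the determinantal expression for $\Delta_d$. Your parity bookkeeping at the end is slightly more explicit than the paper's, but the argument is the same.
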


\begin{proof}
Let us recall the differential polynomials $\ell_d$ and $r_d$ appearing in the definition of $\Delta_d$. 
Let us prove first:
\begin{itemize}
    \item[(a)] The degree of $\ell_d$ in the variable $b$ is small or equal to $\frac{d+1}{2}$.
    \item[(b)] The degree of $r_d$ in the variable $b$ is small or equal to $\frac{d+2}{2}$.
\end{itemize}
The degree of $\ell_0 = -2a$ in the variable $b$ is $0$ an the degree of $r_0 = b-a'$ in the variable $b$ is $1$. Therefore (a) and (b) hold for $d=0$. Now, from the recurrence law \eqref{recurrence_lr} we have that the degree in $b$ of $\ell_{j+1}$ is at most that of $r_j$ and that the degree in $b$ of $r_{j+1}$ is at most a unit bigger that the degree of $\ell_j$. This proves (a) and (b). The degree of $\delta_d$ is at most the maximum between the sum of the degrees of $\ell_d$ and $r_{d-1}$ and the sum of the degrees of $\ell_{d-1}$ and $r_d$; which is at most $d+1$.
\end{proof}

\begin{theorem}\label{th:eigen}
Let $U(x)$ be an algebraically quasi-solvable polynomial potential, and let $d$ be its arithmetic condition. The number of values of the energy parameter $\lambda$ such that Equation \eqref{eq:sch} has a Liouvillian solution is at most $d+1$.
\end{theorem}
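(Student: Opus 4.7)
The plan is to translate the Liouvillian eigenvalue problem into a single bivariate polynomial equation and then extract the bound from Lemma \ref{lm:degree} by an elementary divisibility argument.

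First I would invoke Lemma \ref{comsqr} to write $-U(x)=A(x)^{2}+B(x)$, so that the Schr\"odinger equation \eqref{eq:sch} at energy $\lambda$ takes the form $\psi''=M_{\lambda}(x)\psi$ with $M_{\lambda}(x)=A(x)^{2}+(B(x)+\lambda)$. The shift by $\lambda$ only modifies the constant term of $B(x)+\lambda$, so the coefficient $b_{n-1}$ and hence the arithmetic condition $d$ are $\lambda$-invariant. The necessary condition $b_{n-1}^{2}=(n+2d)^{2}$ of Theorem \ref{polychar} fixes the sign of $b_{n-1}$, so up to relabeling I may assume $b_{n-1}=2d+n$, and by Theorem \ref{seqdelta} (equivalently the description of $\mathbb{L}_{2n,d}^{+}$) the Liouvillian integrability condition at energy $\lambda$ becomes the identical vanishing in $x$ of $\Delta_{d}(A(x),B(x)+\lambda)$; the case $b_{n-1}=-(2d+n)$ is entirely symmetric via $\Delta_{d}(-A(x),B(x)+\lambda)$.

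Next I would introduce
$$F(x,\lambda)\;:=\;\Delta_{d}\bigl(A(x),\,B(x)+\lambda\bigr)\;\in\;\mathbb{C}[x,\lambda].$$
Since the derivatives $B'(x),B''(x),\ldots$ are unaffected by the shift, the energy $\lambda$ enters $F$ only through the differential variable $b$ of $\Delta_{d}$, and does so affinely; Lemma \ref{lm:degree} then yields $\deg_{\lambda}F\leq d+1$. Viewing $F$ as an element of $\mathbb{C}[x][\lambda]$, the condition that $\lambda_{0}$ be a Liouvillian energy is exactly $F(x,\lambda_{0})\equiv 0$ in $\mathbb{C}[x]$, which by polynomial division in $\lambda$ over $\mathbb{C}[x]$ is equivalent to $(\lambda-\lambda_{0})\mid F$ in $\mathbb{C}[x,\lambda]$. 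Distinct complex numbers $\lambda_{1},\ldots,\lambda_{N}$ give pairwise coprime irreducible factors in the UFD $\mathbb{C}[x,\lambda]$, so their product divides $F$, and comparing degrees in $\lambda$ forces $N\leq d+1$.

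To close the argument I would rule out the degenerate possibility $F\equiv 0$: were this to hold, every $\lambda\in\mathbb{C}$ would be a Liouvillian energy and $U$ would be algebraically solvable rather than algebraically quasi-solvable, contradicting the hypothesis. I do not expect any single step to be a serious obstacle; the one place where a little care is needed is the clean separation of the two auxiliary equations \eqref{auxeq:1}--\eqref{auxeq:2}, but that is automatic here precisely because $b_{n-1}$ is $\lambda$-independent, so a single obstruction $\Delta_{d}(\pm A(x),B(x)+\lambda)$ with fixed sign governs every energy.
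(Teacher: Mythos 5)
Your proposal is correct and follows essentially the same route as the paper: reduce to the identical vanishing in $x$ of $\Delta_{d}(\pm A(x),B(x)+\lambda)$ via Theorem \ref{seqdelta}, observe that $\lambda$ enters only through the variable $b$, and bound the $\lambda$-degree by $d+1$ using Lemma \ref{lm:degree}. Your divisibility argument in $\mathbb{C}[x,\lambda]$ and the explicit exclusion of the degenerate case $F\equiv 0$ (via the quasi-solvability hypothesis) merely make precise two steps the paper leaves implicit.
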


\begin{proof}
Generically, we may consider that $U(x)$ has no independent term. Then the condition on $\lambda$ for the existence of a Liouvillian solution is the vanishing of $\Delta_d(A(x),B(x)+\lambda)$
which is a polynomial in $x$ of $\lambda$. The number of values of $\lambda$ for which this polynomial vanish can not be greater than its degree in $\lambda$. Clearly, the degree in $\lambda$ of $\Delta_d(A(x),B(x)+\lambda)$ can not exceed the degree in $b$ of $\Delta_d(a,b)$ which is bounded by $d+1$ by Lemma \ref{lm:degree}.
\end{proof}

\begin{example}
In order to illustrate the procedures developed here let us consider the non-singular Turbiner potential 
\begin{equation}\label{turbiner}
U(x)=x^6 - (4J+1)x^2
\end{equation} 
where $J$ is a non-negative integer. This potential has been studied in several papers, including \citep{BD96}. Let  $\mathbf{C}_{d}\subset\mathbb{L}'_{6,d}$ be the set consisting of all possible values for $J$ and $\lambda$ with polynomial hyperexponential solutions of polynomial degree $d$. In virtue of Theorem \ref{seqdelta} it is a subvariety of $V(2J-d-1)$. So, $d$ shall only take non-negative odd values.

On the other hand, we can easily compute the equations of $\mathbf{C}_{d}$ through the universal differential polynomial $\Delta_{d}(x^3,-(4J+1)x^2-\lambda)$ for the auxiliary equation
\begin{equation}
P_{d}''-2x^3 P_{d}'-(3x^2 -(4J+1)x^2-\lambda)P_{d}=0.
\end{equation} 
For the case $d=1$ we get the following equations
\begin{equation}
  \begin{cases}2J-2=0\\
  -\lambda ^2=0\\
  2 (-4 J-1) \lambda +12 \lambda=0\\
  -(-4 J-1)^2-8 (-4 J-1)-15=0.
\end{cases}  
\end{equation}
Taking into account above consideration we compute the first seven  equations for $\mathbf{C}_{d}$ 
\begin{table}[ht]
\centering
\begin{tabular}{|l|l|}
\hline
$d$  & $\mathbf{C}_{d}$ \\ \hline
 1  &  $\begin{cases}J=1\\ \lambda=0\end{cases}$\\ \hline
 3  &$\begin{cases}J=2\\ \lambda^2-24=0\end{cases}$  \\ \hline
 5 &$\begin{cases}J=3\\ \lambda^3-128\lambda=0\end{cases}$  \\ \hline
 7  &$\begin{cases}J=4\\ \lambda^4  - 400 \lambda^2 +12096=0 \end{cases}$  \\ \hline
 9 &$\begin{cases}J=5\\ -\lambda ^5+960 \lambda ^3-129024 \lambda=0 \end{cases}$  \\ \hline
11 &$\begin{cases}J=6\\ \lambda ^6-1960 \lambda ^4+729280 \lambda ^2-26611200=0\end{cases}$  \\ \hline
13 &$\begin{cases}J=7\\ -\lambda ^7+3584 \lambda ^5-2934784 \lambda ^3+438829056 \lambda=0\end{cases} $  \\ \hline
\end{tabular}
\caption{Spectral system of Schr\"odinger equation associated to \eqref{turbiner}}
\end{table}
\end{example}

\section*{Final Remarks}
In this paper we developed a technique to obtain Liouvillian solutions for parameterized  second order linear differential equations with polynomial coefficients. In particular case, we study the set of possible values of energy to get Liouvillian solutions of Schr\"odinger equations with anharmonic potentials. We adapted asymptotic iteration method, Kovacic's Algorithm and previous results provided in \cite{AB08,A09,A10,AMW11} in terms of algebraic varieties extending slightly the known results about polynomial quasi-solvable potentials.
\section*{Acknowledgements}

This research has been partially funded by Colciencias project “Estructuras lineales en geometr\'ia y topolog\'ia” 776-2017 code 57708 (Hermes UN 38300). 
\bibliographystyle{plain}
\bibliography{ABV110819}

\end{document}